 \newcommand{\re}[1]{#1}
 \newcommand{\hl}[1]{#1}
\ifdraft{\linenumbers}{}
\renewcommand{\epsilon}{\ensuremath\varepsilon}
\newcommand{\R} {\mathbb{R}} 
\newcommand{\C} {\mathbb{C}} 
\newcommand{\N} {\mathbb{N}} 
\newcommand{\D} {\mathcal{D}} 
\newcommand{\Q} {\mathbb{Q}}
\newcommand{\E} {\mathbb{E}}
\newcommand{\cE} {\mathcal{E}}
\newcommand{\cR} {\mathcal{R}}
\renewcommand{\P} {\mathcal{P}}
\newcommand{\vars}[1]{\ensuremath{\operatorname{Vars}(#1)}\xspace}
\newcommand{\varsn}[2]{\ensuremath{\operatorname{Vars}_{#1}(#2)\xspace}}
\newcommand{\Tool}{{\texttt{Polar}}}
\DeclareMathOperator{\tr}{tr}
\DeclareMathOperator{\SL}{SL}
\newcommand{\bernoulli}[1]{\ensuremath{\operatorname{Bernoulli}({#1})}}
\newcommand{\Inv}[1]{\ensuremath{\operatorname{Inv}({#1})}}
\newcommand{\normal}[1]{\ensuremath{\operatorname{Normal}({#1})}}
\newcommand{\uniform}[1]{\ensuremath{\operatorname{Uniform}({#1})}}
\newcommand{\s}[1]{{S}(#1)}
\newcommand{\seq}[3][{}]{\langle #2 \rangle_{#3}^{#1}}
\DeclareMathAlphabet{\mathbbb}{U}{bbold}{m}{n}
\providecommand*{\eu}%
{\ensuremath{\mathrm{e}}}
\providecommand*{\iu}%
{\ensuremath{\mathrm{i}}}
\newcommand*\wc{{}\cdot{}}
\newcommand{\timeout}{\textsc{-}}
\newtheorem{theorem}{Theorem}
\newtheorem{lemma}[theorem]{Lemma}
\newtheorem{corollary}[theorem]{Corollary}
\newtheorem{example}[theorem]{Example}%
\newtheorem{remark}[theorem]{Remark}%
\newtheorem{definition}[theorem]{Definition}%
\begin{document}

\title[(Un)Solvable Loop Analysis]{(Un)Solvable Loop Analysis}


\author{\fnm{Daneshvar} \sur{Amrollahi}}\email{daneshvar.amrollahi@tuwien.ac.at}

\author{\fnm{Ezio} \sur{Bartocci}}\email{ezio.bartocci@tuwien.ac.at}

\author*{\fnm{George} \sur{Kenison}}\email{george.kenison@tuwien.ac.at}

\author{\fnm{Laura} \sur{Kovács}}\email{laura.kovacs@tuwien.ac.at}

\author{\fnm{Marcel} \sur{Moosbrugger}}\email{marcel.moosbrugger@tuwien.ac.at}

\author{\fnm{Miroslav} \sur{Stankovič}}\email{miroslav.stankovic@tuwien.ac.at}

\affil{\orgname{TU Wien}, \orgaddress{ \city{Vienna}, \country{Austria}}}


\abstract{
Automatically generating invariants, key to computer-aided analysis of probabilistic and deterministic programs and compiler optimisation, is a challenging open problem.
Whilst the problem is in general undecidable, the goal is settled for restricted classes of loops.
For the class of \emph{solvable} loops, introduced by Kapur and Rodr\'iguez-Carbonell in 2004, one can automatically compute invariants from closed-form solutions of recurrence equations that model the loop behaviour.
In this paper we establish a technique for invariant synthesis for loops that are not solvable, termed \emph{unsolvable} loops.
Our approach automatically partitions the program variables and identifies the so-called \emph{defective} variables that characterise unsolvability.
\hl{Herein we consider the following two applications.
First, we present  a novel technique that automatically synthesises  polynomials from {defective}  monomials, that admit closed-form solutions and thus lead to polynomial loop  invariants.
Second, given an unsolvable loop, we synthesise solvable loops with the following property: the invariant polynomials of the solvable loops are all invariants of the given unsolvable loop.  
} 
Our implementation and experiments demonstrate both the feasibility and applicability of our approach to both deterministic and probabilistic programs.


}

\keywords{Invariant Generation, Solvable Loop Synthesis, Algebraic Recurrences, Verification, Solvable Operators}

\maketitle



%
%
%
\newpage

\section*{Extension of Previous Work}

This paper is an extended version of the conference paper `Solving Invariant Generation for Unsolvable Loops' published at SAS 2022 \cite{amrollahi2022solving}.
We extended the text and results from the conference paper. 

In addition, this paper brings the following two new contributions when compared to our SAS 2022 paper  \cite{amrollahi2022solving}.
First, we introduce an  algorithmic approach (Algorithm~\ref{alg:loop-synthesis}) that synthesises solvable loops from unsolvable loops, such that the polynomial invariants of the solvable loop are also invariants of the unsolvable one. 
Second, we expand the list of unsolvable loops (with new Examples~\ref{ex:fib1}--\ref{ex:nagata}), drawn from the Mathematics and Physics literature;  our benchmarks yield a new test suite for evaluating loop analysis techniques and demonstrate the feasibility of our approach.

\section{Introduction}

\subsection{Background and Motivation}

With substantial progress in computer-aided program analysis and automated reasoning, 
several techniques have emerged to automatically synthesise loop invariants, thus advancing
a central challenge in the computer-aided verification of programs with loops. 
In this paper, we 
address the problem of automatically generating loop
invariants in the presence of polynomial arithmetic, which is still unsolved.
This problem remains  unsolved even when we restrict consideration to loops that are non-nested, without conditionals, and/or without exit conditions. Our work improves the state of the art under such and similar considerations. 

\emph{Loop invariants}, in the sequel simply \emph{invariants}, are properties that hold before and after every iteration of a loop. Invariants therefore provide the key inductive arguments for automating the verification of programs; for example, proving correctness of deterministic loops~\cite{Rodriguez04,Kovacs08,Oliveira16,Kincaid18,Humenberger18} and correctness of hybrid and  probabilistic loops~\cite{Huang2014,KaminskiKMO16,BKS19}, or data flow analysis and compiler optimisation~\cite{muller2004computing}.
One challenging aspect in invariant synthesis is the derivation of \emph{polynomial invariants} for arithmetic loops.
Such invariants 
are defined  by polynomial relations 
\(P(x_1, \dots, x_k) = 0\) among the program variables $x_1, \ldots, x_k$. 
While  deriving polynomial invariants is, in general, undecidable~\cite{Ouaknine20}, efficient invariant synthesis techniques emerge when considering restricted classes of polynomial arithmetic in so-called \emph{solvable loops}~\cite{Rodriguez04}, such as loops with (blocks of) affine assignments~\cite{Kovacs08,Oliveira16,Humenberger18,Kincaid18}.

A common approach for constructing polynomial invariants, first pioneered in~\cite{Waldinger72,DBLP:journals/cacm/KatzM76}, is to (i) map a loop to a system of recurrence equations modelling the behaviour of program variables; (ii) derive closed-forms for program variables by solving the recurrences; and (iii) compute polynomial invariants by eliminating the loop counter $n$ from the closed-forms.
\re{The polynomial invariants resulting from step (iii) over-approximate the fixed point of the loop.}
The central components in this setting follow.
In step (i) a \emph{recurrence operator} is employed to map loops to recurrences, 
which leads to closed-forms for the program variables as \emph{exponential polynomials} in step (ii); that is, each program variable is written as a finite sum of the form $\sum_j P_j(n) \lambda_j^n$ parameterised by the \(n\)th loop iteration for polynomials $P_j$ and algebraic numbers $\lambda_j$. 
From the theory of algebraic recurrences, 
this is the case if and only if the behaviour of each variable obeys a linear recurrence equation with constant coefficients \cite{everest2003recurrence,kauers2011concrete}.
Exploiting this result, the class of recurrence operators that can be linearised are called \emph{solvable}~\cite{Rodriguez04}.
Intuitively, a loop with a recurrence operator is solvable only if the non-linear dependencies in the resulting system of polynomial recurrences are acyclic (see Section~\ref{section:recurrences}). 
However, even simple loops may fall outside the category of solvable operators, but still admit polynomial invariants and closed-forms for combinations of variables.
This phenomenon is illustrated in Figure~\ref{fig:running-examples} whose recurrence operators are not solvable (i.e. unsolvable).
\re{In other works, the generation of polynomial invariants is usually limited to those variables that admit closed-forms.
In our work, specifically step (iii), we can generate polynomial invariants from \emph{combinations of program variables} that admit closed-forms (where individual variables may fail to do so). This analysis can lead to a tighter over-approximation of a loop's fixed point.}
In general, the main obstacle in the setting of unsolvable recurrence operators is the absence of \lq\lq well-behaved" closed-forms for the resulting recurrences.

\begin{figure}[tb]
    \begin{subfigure}[c]{0.45\linewidth}
        \begin{algorithmic}
            \STATE \(z \leftarrow 0\)
            \WHILE {$\star$}
                \STATE \(z \leftarrow 1 - z\)
                \STATE \(x \leftarrow 2x + y^2 + z\)
                \STATE \(y \leftarrow 2y - y^2 + 2z\)
            \ENDWHILE
        \end{algorithmic}
        \begin{tcolorbox}[width=0.9\linewidth,boxrule=1pt,leftrule=2pt,arc=0pt,auto outer arc]
        \textbf{Closed-form of \(x+y\):} \\
        $x(n) + y(n) = 2^n(x(0) + y(0) + 2) - (-1)^n/2 - 3/2$
        \end{tcolorbox}
        \caption{The program $\P_\square$.}
        \label{fig:running:squares}
    \end{subfigure}
    \hfill
    \begin{subfigure}[c]{0.45\linewidth}
        \begin{algorithmic}
            \vspace{1em}
                \STATE \(x, y \leftarrow 1, 1\)
            \WHILE {$\star$}
                \STATE \(  w \leftarrow x + y \)
                \STATE \( x \leftarrow w^2 \)
                \STATE \( y \leftarrow w^3 \)
            \ENDWHILE
        \end{algorithmic}
        \begin{tcolorbox}[width=0.9\linewidth,boxrule=1pt,leftrule=2pt,arc=0pt,auto outer arc]
        \textbf{Polynomial Invariant:} \\
        $y^2(n)-x^3(n) = 0$
        \vspace{1em}
        \end{tcolorbox}
        \caption{The program $\P_\textrm{SC}$.}
        \label{fig:running:2}
    \end{subfigure}
        \caption{Two running examples with unsolvable recurrence operators. Nevertheless, $\P_\square$ admits a closed-form for combinations of variables and $\P_\textrm{SC}$ admits a polynomial invariant. Herein we use $\star$ (rather than a loop guard or \texttt{true}) as loop termination is not our focus.
        For the avoidance of doubt, in this paper we consider standard mathematical  arithmetic (e.g. mathematical integers) rather than machine floating-point and finite precision arithmetic.
        }
    \label{fig:running-examples}
\end{figure}

\subsection{Related Work} 
To the best of our knowledge, the study of invariant synthesis from the viewpoint of recurrence operators is mostly limited to the setting of solvable operators (or minor generalisations thereof).
In~\cite{Rodriguez04,Rodriguez07} the authors introduce solvable loops and mappings to model loops with (blocks of) affine assignments and propose solutions for steps (i)--(iii) for this class of loops: all polynomial invariants are derived by first solving linear recurrence equations and then eliminating variables based on Gröbner basis computation.
These results have further been generalised in~\cite{Kovacs08,Humenberger18} to handle more generic recurrences; in particular, deriving  arbitrary exponential polynomials as closed-forms of loop variables and allowing restricted multiplication among recursively updated loop variables.
The authors of~\cite{Farzan15,Kincaid18} generalise the setting: they consider more complex programs and devise abstract (wedge) domains to map the invariant generation problem to the problem of solving \emph{C-finite recurrences}. 
(We give further details of this class of recurrences in Section~\ref{sec:preliminaries}).
All the aforementioned approaches are mainly restricted to C-finite recurrences for which closed-forms always exist, thus yielding loop invariants. 
In \cite{BKS19,bns} the authors establish techniques to apply invariant synthesis techniques developed for deterministic loops to probabilistic programs.
Instead of devising recurrences describing the precise value of variables in step (i), their approach produces C-finite recurrences describing (higher) moments of program variables, yielding moment-based invariants after step (iii).

Pushing the boundaries in analyzing unsolvable loops is addressed in~\cite{Kincaid18,Frohn20}. The approach of~\cite{Kincaid18}  extracts C-finite recurrences over linear combinations of loops variables from unsolvable loops. For example, the method presented in~\cite{Kincaid18}  can also synthesise the closed-forms identified by our work for Figure~\ref{fig:running:squares}. However, 
unlike~\cite{Kincaid18}, our work is not limited to linear combinations (we can  extract C-finite recurrences over \emph{polynomial} relations in the loop variables). As such, the technique of~\cite{Kincaid18} cannot synthesise the polynomial loop invariant in Figure~\ref{fig:running:2}, whereas our work can.
A further related approach to our work is given in~\cite{Frohn20}, yet in the setting of loop termination. However, our work is not restricted to solvable loops that are  triangular,  but  can handle mutual dependencies among (unsolvable) loop variables, as evidenced in Figure~\ref{fig:running-examples}.

Related work in the literature introduces techniques from the theory of martingales in order to synthesise invariants in the setting of probabilistic programs \cite{chakarov2013probabilistic}.
Therein, the programming model is represented by a class of loop programs where all updates are linear and the synthesised invariants are given by linear templates.
By contrast, our method allows us to handle polynomial arithmetic; in particular, we automatically generate invariants given by monomials in the program variables.
On the other hand, the approach of~\cite{chakarov2013probabilistic} can also synthesise supermartingales whereas our  work is restricted to invariants defined by equalities.

\subsection{Our Contributions} 
In this paper we \hl{consider the sister problems of invariant generation and solvable loop synthesis} in the setting of unsolvable recurrence operators.
We introduce the notions of \emph{effective} and \emph{defective} program variables where, figuratively speaking, the defective variables are those \lq\lq responsible\rq\rq\ for unsolvability. Our main contributions are summarised below. 
    \begin{enumerate}
        \item Crucial for our synthesis technique is our  novel characterisation of unsolvable recurrence operators in terms of defective variables (Theorem~\ref{thm:def-char}).
        Our approach  complements existing techniques in loop analysis, by extending these methods to the setting of `unsolvable' loops.
        \item On the one hand, defective variables do not generally admit closed-forms.
                On the other hand, some polynomial combinations of such variables are well-behaved (see e.g., Figure~\ref{fig:running-examples}). 
                We show how to compute the set of defective variables in polynomial time (Algorithm~\ref{alg:nmcv-alg}).
        \item We introduce a new technique to synthesise valid linear relations in defective monomials such that these relations  admit closed-forms, from which polynomial loop invariants follow (Section~\ref{sec:invariants}).
        \item \hl{Given an unsolvable loop, we introduce an algorithmic approach   (Algorithm~\ref{alg:loop-synthesis}) that synthesises a solvable loop with the following property: every polynomial invariant of the solvable loop is also an invariant of the given unsolvable loop (Section~\ref{sec:loopsynthesis}).
        }
        \item We generalise our work to the analysis of probabilistic program loops (Section~\ref{sec:probabilistic}) and showcase further applications of  unsolvable operators in such programs (Section~\ref{section:case-studies}). 
        \item We provide a fully automated approach in the tool \Tool\footnote{\url{https://github.com/probing-lab/polar}}. \hl{For evaluating our work, we compiled an extensive lists of challenging loops from the literature, including applications of  mathematical and physical modelling}.  Our experiments demonstrate the feasibility of invariant synthesis for `unsolvable' loops and the applicability of our approach to deterministic loops, probabilistic models, and biological systems (Section~\ref{section:experiments}).
    \end{enumerate}

\subsection{Beyond Invariant Generation}
We believe our work can provide new solutions towards compiler optimisation challenges. \emph{Scalar evolution\footnote{\url{https://llvm.org/docs/Passes.html}}} is a technique to detect general induction variables.  
Scalar evolution and general induction variables are used for a multitude of compiler optimisations, for example inside the LLVM toolchain~\cite{LattnerA04}. 
On a high-level, general induction variables are loop variables that satisfy linear recurrences.
As we show in our work, defective variables do not satisfy linear recurrences in general;  hence,  scalar evolution optimisations cannot be applied upon them. 
However, some linear combinations of defective monomials \emph{do} satisfy linear recurrences, which opens avenues where we can apply scalar evolution techniques over such monomials. 
In particular, our work automatically computes 
  polynomial combinations of some defective loop variables, 
which potentially enlarges the class of loops that, for example, LLVM can optimise.

\subsection{Structure and Summary of Results} The rest of this paper is organised as follows. 
    We briefly recall preliminary material in Section~\ref{sec:preliminaries}. 
    Section~\ref{section:recurrences} abstracts from concrete recurrence-based approaches to invariant synthesis via recurrence operators. 
    Section~\ref{sec:effective-deffective} introduces effective and defective variables, presents Algorithm~\ref{alg:nmcv-alg} that computes the set of defective program variables in polynomial time, and characterises unsolvable loops in terms of defective variables (Theorem~\ref{thm:def-char}). 
   In Section~\ref{sec:invariants} we present our new technique that synthesises linear relations in defective monomials that admit well-behaved closed-forms. 
   \hl{In Section~\ref{sec:loopsynthesis} we introduce Algorithm~\ref{alg:loop-synthesis} to synthesise solvable loops.  That is, given an unsolvable loop, Algorithm~\ref{alg:loop-synthesis}  outputs a solvable loop, if it exists such that each polynomial invariant of the solvable loop is also an invariant of the unsolvable loop.}
   In Section~\ref{sec:probabilistic} we detail the necessary changes to the \hl{algorithms in Sections~\ref{sec:invariants} and \ref{sec:loopsynthesis}} for probabilistic programs.
    We illustrate our approach  with several case-studies in Section~\ref{section:case-studies}, and describe  a 
    fully-automated tool support of our work in Section~\ref{section:experiments}. We  also report on accompanying experimental evaluation in Sections~\ref{section:case-studies}--\ref{section:experiments}, and conclude the paper in Section~\ref{sec:conclusion}.

\section{Preliminaries}
\label{sec:preliminaries}

 \subsection{Notation}
 Throughout this paper, we write $\N$, $\Q$, and $\R$ to respectively denote the sets of natural, rational, and real numbers.
 We write $\overline{\Q}$, the algebraic closure of $\Q$, to denote the field of algebraic numbers.
 We  write $\R[x_1,\ldots,x_k]$ and $\overline{\Q}[x_1,\ldots,x_k]$ for the polynomial rings of all polynomials $P(x_1, \ldots, x_k)$ in $k$ variables $x_1,\ldots,x_k$ with coefficients in $\R$ and  $\overline{\Q}$, respectively (with $k\in\N$ and $k\neq 0$). 
 A \emph{monomial} is a monic polynomial with a single term.
 
 For a program $\P$, $\vars\P$ denotes the set of program variables.
 We adopt the following syntax in our examples.
 Sequential assignments in while loops are listed on separate lines (as demonstrated in Figure~\ref{fig:running-examples}).
 In programs where simultaneous assignments are performed, we employ vector notation (as demonstrated by the assignments to the variables \(x\) and \(y\) in program \(\P_{\textrm{MC}}\) in Example~\ref{ex:NLMarkov-1}).

 We refer to a directed graph with $G$, whose edge and vertex (node) sets are respectively denoted via $A(G)$ and $V(G)$. 
 We endow each element of $A(G)$ with a label according to a labelling function $\mathcal{L}$. A \emph{path} in $G$ is a finite sequence of contiguous edges of $G$, whereas a \emph{cycle} in $G$  is a path whose initial and terminal vertices coincide.
 A graph that contains no cycles is \emph{acyclic}.
 In a graph \(G\), if there exists a path from vertex $u$ to vertex $v$, then
 we say that $v$ is \emph{reachable} from vertex $u$ and say that
 \(u\) is a \emph{predecessor} of \(v\).

\subsection{C-finite recurrences} 
We recall relevant results on (algebraic) recurrences  and refer to~\cite{everest2003recurrence,kauers2011concrete} for further details.
A \emph{sequence} in $\overline\Q$ is a function $u \colon\N\to\overline\Q$, shortly written also as  $\seq[\infty]{u(n)}{n=0}$ or simply just $\seq{u(n)}{n}$. A \emph{recurrence} for a {sequence} $\seq{u(n)}{n}$ is an equation 	$u(n{+}\ell) = \textrm{Rec}(u(n{+}\ell{-}1), \ldots, u(n{+}1), u(n),n)$, for some function $\textrm{Rec}\colon\R^{\ell+1}\to\R$.
The number $\ell\in\N$ is the \emph{order} of the recurrence.

A special class of recurrences we consider are the   
\emph{linear recurrences with constant coefficients}, in short \emph{C-finite recurrences}. 
A C-finite recurrence for a sequence \(\seq{u(n)}{n}\) is an equation of the form
\begin{equation}\label{eq:CFiniteRec}
	u(n{+}\ell) = a_{\ell{-}1} u(n{+}\ell{-}1) + a_{\ell{-}2} u(n{+}\ell{-}2) + \cdots + a_{0} u(n)
\end{equation}
where  $a_0,\ldots, a_{\ell-1}\in \overline\Q$ are constants and $a_0 \neq 0$.
A sequence \(\seq{u(n)}{n}\) satisfying a C-finite recurrence~\eqref{eq:CFiniteRec} is a \emph{C-finite sequence} and is uniquely determined by its initial values \(u_0=u(0),\ldots, u_{\ell-1}=u(\ell{-}1)\).  
The \emph{characteristic polynomial} associated with the  
C-finite recurrence relation~\eqref{eq:CFiniteRec} is
\begin{equation*}
    x^{n+\ell} - a_{\ell-1} x^{n+\ell-1} - a_{\ell-2} x^{n+\ell-2} - \cdots - a_{0} x^{n}.
\end{equation*}
The terms of a C-finite sequence can be written in a closed-form as exponential polynomials, depending only on $n$ and the initial values of the sequence. That is, if  $\seq{u(n)}{n}$ is determined by a C-finite recurrence~\eqref{eq:CFiniteRec}, then $u(n) = \sum_{k=1}^r P_k(n) \lambda_k^n$ where $P_k(n)\in\overline{\Q}[n]$ and $\lambda_1,\ldots, \lambda_r$ are  the roots of the associated characteristic polynomial. 
Importantly, closed-forms of (systems of) C-finite sequences always exist and are computable \cite{everest2003recurrence,kauers2011concrete}.

\subsection{Invariants} 
A loop invariant is a loop property that holds before and after each loop iteration~\cite{Hoare69}. In this paper, we are interested in \emph{polynomial invariants}, the class of invariants given by Boolean combinations of polynomial equations among loop variables. 
There is a minor caveat to our characterisation of (polynomial) loop invariants.
We assume that a (polynomial) invariant consists of a finite number of initial values
together with a closed-form expression of a monomial in the loop variables.
Thus the closed-form of a loop invariant must eventually hold after a (computable) finite number of loop iterations.
Let us illustrate this caveat with the following loop example. 
\begin{example}
\re{%
\begin{algorithmic}
\STATE \(x,y,z \leftarrow 0,1,0\)
\WHILE{$\star$}
\STATE \(x \leftarrow 1\)
\STATE \(y \leftarrow y+x\)
\STATE \(z \leftarrow z+1\)
\ENDWHILE
\end{algorithmic}
}
\re{The loop admits the polynomial invariant $y - z - 1 = 0$ given by the initial values
\(x(0)=0\), \(y(0)=1\), \(z(0)=0\) and the closed-forms \(x(n)=1\), \(y(n)=n+1\), and \(z(n)=n\).
For each \(n\ge 1\), we denote by $v(n)$ the value of a loop variable $v$ at loop iteration $n$.}
\end{example}

Herein, we synthesise invariants that satisfy inhomogeneous first-order recurrence relations and it is straightforward to show that each associated closed-form holds for \(n\ge 1\).

\subsection{Polynomial Invariants and Invariant Ideals}
A polynomial \emph{ideal} is a subset $I \subseteq \overline{\Q}[x_1,\ldots,x_k]$ with the following properties:
$I$ contains $0$; \(I\) is closed under addition; and if $P \in \overline{\Q}[x_1,\ldots,x_k]$ and $Q \in I$, then $PQ \in I$.
For a set of polynomials $S \subseteq \overline{\Q}[x_1,\ldots,x_k]$, one can define  the \emph{ideal generated by $S$} by
\begin{equation*}
    I(S) := \{ s_1 q_1 + \cdots + s_\ell q_\ell \mid s_i \in S, q_i \in \overline{\Q}[x_1,\ldots,x_k], \ell \in \N \}.
\end{equation*}

Let $\P$ be a program as before.
For  \(x_j\in\vars\P\), let \(\seq{x_j(n)}{n}\) denote the sequence whose
\(n\)th term is given by the value of \(x_j\) in the \(n\)th loop iteration.
The set of polynomial invariants of \(\P\) form an ideal, the \emph{invariant ideal} of \(\P\)~\cite{Rodriguez07}.
If for each program variable \(x_j\) the sequence \(\seq{x_j(n)}{n}\) is C-finite, then a basis for the invariant ideal can be computed as follows.
Let $f_j(n)$ be the exponential polynomial closed-form of variable $x_j$.
The exponential terms $\lambda_1^n, \ldots, \lambda_s^n$ in each of the $f_j(n)$ are replaced by fresh symbols, yielding the polynomials $g_j(n)$.
Next, with techniques from \cite{Kauers2008ComputingTA}, the set $R$ of all polynomial relations among $\lambda_1^n, \ldots, \lambda_s^n$ (that hold for each $n \in \N$) is computed.
Then we express the polynomial relations in terms of the fresh constants, so that we can interpret \(R\) as a set of polynomials.
Thus
\begin{equation*}
    I(\{ x_j - g_j(n) \mid 1 \leq i \leq k \} \cup R) \cap \overline{\Q}[x_1,\ldots,x_k]
\end{equation*}
is precisely the invariant ideal of \(\P\).
Finally,
we can compute a finite basis for the invariant ideal with techniques from Gröbner bases and elimination theory~\cite{Kauers2008ComputingTA}.

\section{From Loops to Recurrences}
\label{section:recurrences}

\subsection{Recurrence Operators}

Modelling properties of loop variables by algebraic recurrences and solving the resulting recurrences is an established approach in program analysis.
Multiple works~\cite{Kovacs08,Farzan15,Kincaid18,Humenberger17,Humenberger18}  associate a loop variable $x$ with a sequence $\seq{x(n)}{n}$  whose \(n\)th term is given by the value of $x$ in the $n$th loop iteration.
These works are primarily concerned with the problem of representing such sequences via recurrence equations whose closed-forms can be computed automatically, as in the case of C-finite sequences. 
A closely connected question to this line of research focuses on identifying  classes of loops that can be modelled by solvable recurrences, as advocated in~\cite{Rodriguez04}. 
To this end,  over-approximation methods for general loops are proposed in~\cite{Farzan15,Kincaid18}  
 such that solvable recurrences can be obtained from (over-approximated) loops. 

In order to formalise the above and similar efforts in  associating  loop variables with recurrences, herein we introduce the concept of a \emph{recurrence operator}, and the characterisation of both \emph{solvable} and \emph{unsolvable operators}. 
Intuitively, a recurrence operator maps program variables to recurrence equations describing some properties of the variables; for instance, the exact values at the $n$th loop iteration~\cite{Rodriguez04,Kovacs08,Farzan15} or statistical moments in probabilistic loops~\cite{BKS19}.

\medskip

\begin{definition}[Recurrence Operator]
A \emph{recurrence operator} $\cR$ maps the program variables $\vars\P$ to the polynomial ring $\R[\operatorname{Vars}_n(\P)]$.
The set of equations 
    \begin{equation*}
        \{ x(n{+}1) = \cR[x] \mid x \in \vars\P \}
    \end{equation*} 
constitutes a polynomial first-order system of recurrences.
We call $\cR$ \emph{linear} if $\cR[x]$ is linear for all $x \in \vars\P$.

One can extend the operator $\cR$ to $\R[\vars\P]$.
Then, with a slight abuse of notation, for $P(x_1, \ldots, x_j) \in \R[\vars\P]$ we define $\cR(P)$ by $P(\cR[x_1], \ldots, \cR[x_j])$.
\end{definition}

\medskip

For a program $\P$ with recurrence operator $\cR$ and a monomial over program variables $M := \prod_{x \in \vars\P} x^{\alpha_x}$, we denote by $M(n)$ the product of sequences $\prod_{x \in \vars\P} x^{\alpha_x}(n)$.
Given a polynomial $P$ over program variables, $P(n)$ is defined by replacing every monomial $M$ in $P$ by $M(n)$.
For a set $T$ of polynomials over program variables let $T_n := \{ P(n) \mid P \in T \}$.

\medskip

\begin{example}\label{remark:running2_recs}
Consider the program \(\P_\textrm{SC}\) in Figure~\ref{fig:running:2}.
One can employ a recurrence operator $\cR$ in order to capture the values of the program variables in the $n$th iteration.
For \(v\in\vars{\P_\textrm{SC}}\), \(\cR[v]\) is obtained by bottom-up substitution in the polynomial updates starting with \(v\). As a result, we obtain the following system of recurrences:
\begin{align*}
    w(n{+}1) &= \cR[w] = x(n) + y(n) \\
    x(n{+}1) &= \cR[x] = x(n)^2 + 2x(n)y(n) + y(n)^2\\
    y(n{+}1) &= \cR[y] = x(n)^3 + 3x(n)^2y(n) + 3x(n)y(n)^2 + y(n)^3.
\end{align*}

Similarly, for the program $\P_\square$ of Figure~\ref{fig:running:squares}, we obtain the following system of recurrences:  
\begin{align*}
    z(n{+}1) &= \cR[z] = 1 - z(n) \\
    x(n{+}1) &= \cR[x] = 2x(n) + y(n)^2 - z(n) + 1 \\
    y(n{+}1) &= \cR[y] = 2y(n) - y(n)^2 - 2z(n) + 2.
\end{align*}
\end{example}

\subsection{Solvable Operators} 
Systems of linear recurrences with constant coefficients admit computable closed-form solutions as exponential polynomials~\cite{everest2003recurrence,kauers2011concrete}.
This property holds for a larger class of recurrences with polynomial updates, which leads to the notion of \emph{solvability} introduced in \cite{Rodriguez04}.
We adjust the notion of solvability to our setting by using recurrence operators. 
In the following definition, we make a slight abuse of notation and order the program variables so that we can transform program variables by a matrix operator.

\medskip

\begin{definition}[Solvable Operators \cite{Rodriguez04,Oliveira16}]
\label{def:solvability}
The recurrence operator $\cR$ is \emph{solvable} if there exists a partition of $\operatorname{Vars}_n$; that is, $\operatorname{Vars}_n = W_1 \uplus \dots \uplus W_k$ such that for $x(n) \in W_j$,
\begin{equation*}
    \cR[x] = M_j \cdot W_j^\top + P_j(W_1,\dots,W_{j-1})
\end{equation*}
for some matrices $M_j$ and polynomials $P_j$.
A recurrence operator that is not solvable is said to be \emph{unsolvable}.
\end{definition}

\medskip 
This definition captures the notion of solvability in \cite{Rodriguez04} (see the discussion in~\cite{Oliveira16}).

We conclude this section by emphasising the use of (solvable) recurrence operators beyond deterministic loops, in particular relating its use to probabilistic program loops. As evidenced in~\cite{BKS19}, 
recurrence operators  model statistical moments of program variables by essentially focusing on  {solvable} recurrence operators extended with an expectation operator $\E(\wc)$ to derive closed-forms of (higher) moments of program variables, as illustrated below. 

\medskip

\begin{example}
\label{ex:NLMarkov-1}
Consider the probabilistic program $\P_\textrm{MC}$ of ~\cite{schreuder2019invariants,chakarov2016deductive} modelling a non-linear Markov chain, where $\bernoulli{p}$ refers to a Bernoulli distribution with parameter $p$.
Here the updates to the program variables \(x\) and \(y\) occur simultaneously.

\begin{algorithmic}
\WHILE{$\star$}
\STATE \(s \leftarrow \bernoulli{1/2}\)
\IF{\(s=0\)}
\STATE \(   
\begin{pmatrix}
x\\
y
\end{pmatrix}
\leftarrow
\begin{pmatrix}
x + xy \\
\frac{1}{3}x + \frac{2}{3}y + xy
\end{pmatrix}
\)
\ELSE
\STATE \(   
\begin{pmatrix}
x\\
y
\end{pmatrix}
\leftarrow
\begin{pmatrix}
x + y +  \frac{2}{3}xy \\
2y + \frac{2}{3}xy
\end{pmatrix}
\)
\ENDIF
\ENDWHILE
\end{algorithmic}

We can construct recurrence equations, in terms of the expectation operator \(\E(\wc)\), for this program as follows:
\begin{align*}
    \E(s_{n+1}) &= \tfrac{1}{2} \\
    \E(x_{n+1}) &= \E(x_n) + \tfrac{1}{2} \E(y_n) + \tfrac{5}{6} \E(x_n y_n) \\
    \E(y_{n+1}) &= \tfrac{1}{6}\E(x_n) + \tfrac{4}{3} \E(y_n) + \tfrac{5}{6} \E(x_n y_n).
\end{align*}
\end{example}

\section{Defective Variables}
\label{sec:effective-deffective}

To the best of our knowledge, existing approaches in loop analysis and invariant synthesis are restricted to solvable recurrence operators. 
In this section, we establish a new characterisation of unsolvable recurrence operators.
Our characterisation pinpoints the program variables responsible for unsolvability, the \emph{defective variables} (see Definition~\ref{def:effective}).
Moreover, we provide a polynomial time algorithm to compute the set of defective variables (Algorithm~\ref{alg:nmcv-alg}), in order to exploit our new characterisation for synthesising invariants in the presence of unsolvable operators in Section~\ref{sec:invariants}.

For simplicity, we limit the discussion in this section to deterministic programs.
We note however that the results presented herein can also be applied to probabilistic programs.
The details of the necessary changes in this respect  are given in Section~\ref{sec:probabilistic}.

In what follows, we write  $\mathcal{M}_n(\P)$ to denote the set of non-trivial \emph{monomials in $\vars\P$ evaluated at the $n$th loop iteration} so that
\begin{equation*}
\mathcal{M}_n(\P) := \mleft\{ \textstyle{\prod_{x \in \vars\P}} x^{\alpha_x}(n) \mid \exists x\in\vars\P \text{ with } \alpha_x \neq 0 \mright\}.
\end{equation*}
We next  introduce the notions of variable dependency and dependency graph, needed to further characterise defective variables. 

\medskip

\begin{definition}[Variable Dependency]\label{def:dependency}
Let $\P$ be a loop with recurrence operator $\cR$ and $x,y \in \vars{\P}$.
We say $x$ \emph{depends on} $y$ if $y$ appears in a monomial in $\cR[x]$ with non-zero coefficient.
Moreover, $x$ \emph{depends linearly on} $y$ if all monomials with non-zero coefficients in $\cR[x]$ containing $y$ are linear.
Analogously, $x$ \emph{depends non-linearly on} $y$ if there is a non-linear monomial with non-zero coefficient in $\cR[x]$ containing $y$.
	
Furthermore, we consider the transitive closure for variable dependency. 
If $z$ depends on $y$ and $y$ depends on $x$, then $z$ depends on $x$ and, if in addition, one of these two dependencies is non-linear, then $z$ depends non-linearly on $x$.
We otherwise say the dependency is linear.
\end{definition}

\medskip

For each program with polynomial updates, we further define a \emph{dependency graph} with respect to a recurrence operator.

\medskip

\begin{definition}[Dependency Graph]
\label{def:dependency-graph}
Let $\P$ be a program with recurrence operator $\cR$.
The \emph{dependency graph} of $\P$ with respect to $\cR$ is the labelled directed graph $G=(\vars\P, A, \mathcal{L})$ with vertex set \(\vars{\P}\), edge set
$A := \{ (x, y) \mid x, y \in \vars\P \land x \text{ depends on } y \}$, and a function $\mathcal{L} \colon A \to \{ L, N \}$ that assigns a unique \emph{label} to each edge such that
\begin{equation*}
    \mathcal{L}(x, y) = 
    \begin{cases} 
          L & \text{if } x \text{ depends \emph{linearly} on } y, \text{ and} \\
          N & \text{if } x \text{ depends \emph{non-linearly} on } y.
    \end{cases}
\end{equation*}
\end{definition}
\medskip

\re{Given a program and a recurrence operator, its dependency graph can be constructed automatically with standard techniques.}
In our approach, we partition the variables $\vars\P$ of the program $\P$ into two sets:  \emph{effective-} and \emph{defective variables},  denoted by $E(\P)$ and $D(\P)$ respectively.
Our partition builds on the definition of the dependency graph of \(\P\), as follows. 

\medskip

\begin{definition}[Effective-Defective] \label{def:effective}
A variable $x \in \vars\P$ is \emph{effective} if:
\begin{enumerate}
    \item \label{cond1} $x$ appears in no directed cycle with at least one edge with an $N$ label, and
    \item \label{cond2} $x$ cannot reach a vertex of an aforementioned cycle (as in \ref{cond1}).
\end{enumerate}
A variable is \emph{defective} if it is not effective.
\end{definition}

\medskip

\begin{example}     
From the recurrence equations of Example~\ref{remark:running2_recs} for the program $\P_\textrm{SC}$ (see Figure~\ref{fig:running:2}), one obtains the dependencies between the program variables of \(\P_\textrm{SC}\): 
the program variable $w$ depends linearly on both $x$ and $y$, whilst $x$ and $y$ depend non-linearly on each other and on $w$.
By definition, the partition into effective and defective variables is $E(\P_\textrm{SC}) = \emptyset$ and $D(\P_\textrm{SC}) = \{w, x, y\}$. 

 Similarly, we can construct the dependency graph for the program $\P_\square$ from Figure~\ref{fig:running:squares}, as illustrated in Figure~\ref{fig:TC:example}.
We derive  that $E(\P_\square) = \{z\}$ and $D(\P_\square) = \{x, y\}$.
\begin{figure}[htb]
 \centering
 \begin{minipage}[t]{0.32\textwidth}\vspace{0pt}
 \vspace{2.1em}
 \centering
 \resizebox{0.85\textwidth}{!}{%
     \begin{tikzpicture}[node distance={22mm}, thick, main/.style = {draw, circle}] 
         \node[main] (w) {$w$}; 
         \node[main] (x) [below left of=w] {$x$}; 
         \node[main] (y) [below right of=w] {$y$}; 
         \draw[->] (w) edge [bend right=8] node[left] {$L$} (x);
         \draw[->] (x) edge [bend right=8] node[right] {$N$} (w);
         \draw[->] (w) edge [bend right=8] node[left] {$L$} (y);
         \draw[->] (y) edge [bend right=8] node[right] {$N$} (w);
         \draw[->] (x) edge [bend right=8] node[inner sep=2pt,below] {$N$} (y);
         \draw[->] (y) edge [bend right=8] node[inner sep=2pt,above] {$N$} (x);
     \end{tikzpicture}
 }
 \end{minipage}
 \quad
 \begin{minipage}[t]{0.32\textwidth}\vspace{0pt}
 \centering
 \resizebox{0.85\textwidth}{!}{%
\begin{tikzpicture}[node distance={22mm}, thick, main/.style = {draw, circle}] 
        \node[main] (z) {$z$}; 
        \node[main] (x) [below left of=z] {$x$}; 
        \node[main] (y) [below right of=z] {$y$}; 
        \path (z) edge [loop above] node {L} (z);
        \draw[->] (x) edge [bend left=8] node[left] {$L$} (z);
        \draw[->] (y) edge [bend right=8] node[right] {$L$} (z);
        \draw[->] (x) edge [bend right=8] node[above] {$N$} (y);
        \path (x) edge [loop below] node {$L$} (x);
        \path (y) edge [loop below] node {$N$} (y);
    \end{tikzpicture}
}
\end{minipage}
 \caption{The dependency graphs for \(\P_\textrm{SC}\) and $\P_\square$ from Figure~\ref{fig:running-examples}.}
 \label{fig:TC:example}
 \end{figure}
\end{example}

\hl{
We give the following straightforward corollary of Definition~\ref{def:effective}.
\medskip
\begin{corollary} \label{cor:effective}
    Given any effective variable \(x\in E(\P)\), the recurrence relation \(\cR[x]\) is a polynomial in effective variables.
\end{corollary}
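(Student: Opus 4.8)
The plan is to translate the statement into the language of the dependency graph and then reduce it to a short reachability argument. By Definition~\ref{def:dependency} and Definition~\ref{def:dependency-graph}, a variable $y$ occurs in a monomial of $\cR[x]$ with non-zero coefficient precisely when $(x,y)$ is an edge of the dependency graph, i.e.\ when $y$ is an out-neighbour of $x$. Hence $\cR[x]$ is a polynomial in effective variables if and only if every out-neighbour of $x$ is itself effective, and so it suffices to prove the implication: if $x\in E(\P)$, then every out-neighbour of $x$ lies in $E(\P)$.

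I would argue by contradiction. Suppose $x\in E(\P)$ but some out-neighbour $y$ of $x$ is defective. By Definition~\ref{def:effective}, $y$ being defective means that $y$ violates condition~(\ref{cond1}) or condition~(\ref{cond2}). In the first case $y$ lies on a directed cycle containing at least one $N$-labelled edge; in the second case $y$ can reach a vertex $v$ lying on such a cycle. In either case there is a vertex on a cycle with an $N$-labelled edge that is reachable from $y$ (taking $v=y$ in the first case). Since $(x,y)$ is an edge, $x$ reaches $y$, and by transitivity of reachability $x$ reaches a vertex of such a cycle as well. This contradicts condition~(\ref{cond2}) in the definition of $x$ being effective. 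Therefore every out-neighbour of an effective variable is effective, which is exactly the claim.

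The argument hinges only on transitivity of reachability together with the two ways in which a variable can fail to be effective, so I do not anticipate a genuine obstacle; note in particular that condition~(\ref{cond1}) for $x$ is not even needed, only its reachability condition~(\ref{cond2}). The single point requiring care is the bookkeeping across the two defective cases: one must check that both failure modes of $y$ yield a vertex of a cycle with an $N$-labelled edge that is reachable from $x$, so that condition~(\ref{cond2}) for $x$ is indeed violated. Once this is observed the corollary follows immediately, consistent with its billing as a straightforward consequence of Definition~\ref{def:effective}.
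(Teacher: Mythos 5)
Your proof is correct and is exactly the argument the paper has in mind: the paper states Corollary~\ref{cor:effective} without proof, calling it a straightforward consequence of Definition~\ref{def:effective}, and your reachability argument (an out-neighbour of $x$ being defective would let $x$ reach a cycle with an $N$-labelled edge, violating condition~(\ref{cond2})) is the natural way to fill in that omitted step. You also correctly identify the one point needing care, namely that both failure modes of $y$ reduce to reachability of such a cycle from $x$.
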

}

\medskip

The concept of effective, and, especially, defective variables allows us to establish a new characterisation of programs with unsolvable recurrence operators: {\it a recurrence operator is unsolvable if and only if there exists a defective variable} (as stated in Theorem~\ref{thm:def-char} and automated in Algorithm~\ref{alg:nmcv-alg}). We formalise and prove this results via the following three lemmas. 

\medskip

\begin{lemma}\label{lem:unsolvable:1}
Let $\P$ be a program with recurrence operator $\cR$.
If \(D(\P)\) is non-empty,  so that there is at least one defective variable, then $\cR$ is unsolvable.
\end{lemma}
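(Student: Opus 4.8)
The plan is to prove the contrapositive: assuming $\cR$ is solvable, I would show that $D(\P)=\emptyset$, i.e. every variable is effective. So I fix a solvable partition $\operatorname{Vars}_n = W_1 \uplus \dots \uplus W_k$ as in Definition~\ref{def:solvability}, and for each variable $x$ I let $b(x)\in\{1,\dots,k\}$ denote the index of the block containing $x(n)$. The goal is to read off enough structure from the solvable shape of $\cR$ to rule out the cycles that make a variable defective.

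The first step is to extract two constraints that solvability imposes on the dependency graph of Definition~\ref{def:dependency-graph}. For $x\in W_j$ we have $\cR[x] = M_j\cdot W_j^\top + P_j(W_1,\dots,W_{j-1})$, so every variable occurring in $\cR[x]$ lies in $W_1\cup\dots\cup W_j$. Translated into edges: (a) every edge $(x,y)$ satisfies $b(y)\le b(x)$, since $x$ can depend only on variables of its own or an earlier block; and (b) any edge $(x,y)$ with $b(y)=b(x)$ carries the label $L$, because the only dependence of $x$ on variables of its own block $W_j$ comes from the linear term $M_j\cdot W_j^\top$, whereas every non-linear contribution, collected in $P_j$, involves strictly earlier blocks.

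The crux is then a short monovariant argument on the block index. By (a) the block index is non-increasing along every directed edge, so traversing a cycle $x_1\to\dots\to x_m\to x_1$ forces $b(x_1)\ge\dots\ge b(x_m)\ge b(x_1)$ and hence equality throughout; every cycle therefore lies entirely within a single block. By (b) each edge of such a cycle is labelled $L$, so no cycle contains an $N$-labelled edge. Consequently condition~\ref{cond1} of Definition~\ref{def:effective} holds for every variable, and since no cycle with an $N$-labelled edge exists at all, condition~\ref{cond2} holds vacuously for every variable as well. Thus every variable is effective, $D(\P)=\emptyset$, and the contrapositive is established.

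I expect the main obstacle to be making observation (b) airtight: one must argue carefully that the within-block dependence is purely linear and that each non-linear monomial of $\cR[x]$ draws only on strictly earlier blocks. This is precisely where the decomposition $M_j\cdot W_j^\top + P_j(W_1,\dots,W_{j-1})$ of a solvable recurrence is used, and it is the only point at which the edge labelling interacts with the block ordering. Once (b) is secured, the cycle argument is routine, being essentially a descent on $b(\cdot)$.
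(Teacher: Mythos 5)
Your proposal is correct and is essentially the paper's own argument in contrapositive form: the paper assumes a defective variable and solvability, localises the witnessing cycle to a single block $W_j$ (your monovariant observation (a)), and then uses the decomposition $\cR[w]=M_j\cdot W_j^\top+P_j(W_1,\dots,W_{j-1})$ to force the target of an $N$-edge into a strictly earlier block (your observation (b)), contradicting the partition. Your version merely makes the non-increase of the block index along edges explicit; the substance is the same.
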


\medskip
\begin{proof}
Let $x \in \vars{\P}$ be a defective variable and $G=(\vars\P, A, \mathcal{L})$ the dependency graph of \(\P\) with respect to a recurrence operator \(\cR\).
Following Definition~\ref{def:effective}, there exists a cycle \(C\) such that $x$ is a vertex visited by or can reach said cycle and, in addition, there is an edge in \(C\) labelled by \(N\).

Assume, for a contradiction, that $\cR$ is solvable.
Then there exists a partition $W_1,\ldots,W_k$ of $\varsn{n}\P$ as described in Definition~\ref{def:solvability}.
Moreover, since \(C\) is a cycle, there exists \(j\in\{1,\ldots, k\}\) such that each variable visited by \(C\) lies in \(W_j\).
Let \((w,y)\in C\) be an edge labelled with \(N\).
Since \(w\) depends on \(y\) non-linearly, and 
        \(\cR[w] = M_j \cdot W_j^\top + P_j(W_1,\ldots, W_{j-1})\)
(by Definition~\ref{def:solvability}),
it is clear that \(y(n) \in W_\ell\) for some \(\ell \neq j\).
We also have that \(y(n) \in W_j\) since \(C\) visits \(y\).
Thus we arrive at a contradiction as $W_1,\ldots,W_k$ is a partition of $\varsn{n}\P$.
Hence \(\cR\) is unsolvable.
\end{proof}

Given a program \(\P\) whose variables are all effective, it is immediate that a pair of distinct mutually dependent variables are necessarily linearly dependent and, similarly, a self-dependent variable is necessarily linearly dependent on itself.
Consider the following binary relation \(\sim\) on program variables:
\begin{equation*}
        x \sim y \iff x = y \lor (x \text{ depends on } y \land y \text{ depends on } x).
\end{equation*}
Thus, any two mutually dependent variables are related by \(\sim\).
Under the assumption that all variables of a program \(\P\) are effective, it is easily seen that \(\sim\) defines an equivalence relation on \(\vars\P\).
The partition of the equivalence classes \(\Pi\) of \(\vars\P\) under \(\sim\) admits the following notion of dependence between equivalence classes: for \(\pi,\hat{\pi}\in \Pi\) we say that
\(\pi\) \emph{depends on} \(\hat{\pi}\) if there exist variables \(x\in\pi\) and \(y\in\hat{\pi}\) such that variable \(x\) depends on variable \(y\).

\medskip

\begin{lemma}\label{lem:equivalencegraph}
Suppose that all variables of a program \(\P\) are effective.
Consider the graph \(\mathcal{G}\) with vertex set given by the set of equivalence classes \(\Pi\) and edge set \(A' := \{ (\pi, \hat{\pi}) \mid (\pi \neq \hat{\pi}) \land (\pi \text{ depends on } \hat{\pi}) \}\).
Then $\mathcal{G}$ is acyclic.
\end{lemma}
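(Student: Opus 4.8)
The plan is to argue by contradiction: I would assume that $\mathcal{G}$ contains a directed cycle and derive that two \emph{distinct} equivalence classes must in fact coincide. The backbone of the argument is a lifting lemma relating dependence of classes to dependence of their member variables, which I would establish first, and which relies crucially on the standing hypothesis that all variables of \(\P\) are effective (this is precisely what makes \(\sim\) an equivalence relation, so that \(\Pi\) and the within-class mutual dependence used below are well defined).

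First I would prove the following claim: if \(\pi\) depends on \(\hat\pi\) as classes, with \(\pi \neq \hat\pi\), then \emph{every} variable \(x \in \pi\) depends on \emph{every} variable \(y \in \hat\pi\). The proof is a short chaining argument. By definition of class dependence there exist witnesses \(x_0 \in \pi\) and \(y_0 \in \hat\pi\) with \(x_0\) depending on \(y_0\). Given arbitrary \(x \in \pi\) and \(y \in \hat\pi\), membership in a common \(\sim\)-class forces \(x \sim x_0\) and \(y \sim y_0\); since \(\sim\) encodes mutual dependence, \(x\) depends on \(x_0\) (or \(x = x_0\)) and \(y_0\) depends on \(y\) (or \(y_0 = y\)). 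Transitivity of variable dependence (Definition~\ref{def:dependency}) then yields that \(x\) depends on \(y\). An immediate consequence, via the same claim together with transitivity of variable dependence, is that the class-dependence relation is itself transitive.

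Next I would set up the contradiction. Since the edges of \(\mathcal{G}\) are drawn only between distinct classes, \(A'\) has no self-loops, so any cycle visits at least two distinct vertices; I would take a simple cycle \(\pi_0 \to \pi_1 \to \cdots \to \pi_{m-1} \to \pi_0\) with \(m \geq 2\). The single arc \(\pi_0 \to \pi_1\) gives that \(\pi_0\) depends on \(\pi_1\). Traversing the remaining arcs \(\pi_1 \to \pi_2 \to \cdots \to \pi_{m-1} \to \pi_0\) and invoking transitivity of class dependence gives that \(\pi_1\) depends on \(\pi_0\). Hence \(\pi_0\) and \(\pi_1\) are mutually dependent classes.

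Finally, applying the claim in both directions to fixed representatives \(x \in \pi_0\) and \(y \in \pi_1\) shows that \(x\) depends on \(y\) and \(y\) depends on \(x\), so \(x \sim y\) and therefore \(\pi_0 = \pi_1\), contradicting \(\pi_0 \neq \pi_1\). This establishes that \(\mathcal{G}\) is acyclic. I expect the only delicate point to be the bookkeeping inside the claim — correctly handling the degenerate cases where a chosen variable coincides with its witness — and confirming that the effectiveness hypothesis is what legitimises treating \(\Pi\) as a set of genuine equivalence classes throughout.
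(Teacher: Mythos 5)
Your proof is correct and follows essentially the same route as the paper's: assume a cycle, use the transitivity of (class) dependence to conclude that two distinct classes are mutually dependent, and derive the contradiction that their elements are \(\sim\)-equivalent. Your explicit lifting claim and the case analysis for degenerate witnesses merely spell out details the paper leaves implicit.
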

\begin{proof}
From the definition of \(\mathcal{G}\), it is clear that the graph is directed and has no self-loops.
Now assume, for a contradiction, that \(\mathcal{G}\) contains a cycle.
Since the relation \(\sim\) is transitive, there exists a cycle $C$ in \(\mathcal{G}\) of length two.
Moreover, the variables in a given equivalence class are mutually dependent.
Thus the elements of the two classes in \(C\) are equivalent under the relation \(\sim\), which contradicts the partition into distinct equivalence classes.
Therefore the graph \(\mathcal{G}\) is acyclic, as required.
\end{proof}

\begin{lemma}\label{lem:unsolvable:2}
Let $\P$ be a program with recurrence operator $\cR$.
If each of the program variables of \(\P\) is effective then $\cR$ is solvable.
\end{lemma}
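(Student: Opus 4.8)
The plan is to prove the converse of Lemma~\ref{lem:unsolvable:1}: assuming every variable of \(\P\) is effective, I will explicitly construct a partition of \(\operatorname{Vars}_n(\P)\) witnessing solvability in the sense of Definition~\ref{def:solvability}. The natural candidate partition is already suggested by the preceding development: take the equivalence classes \(\Pi\) of \(\vars\P\) under the relation \(\sim\), evaluated at iteration \(n\). Lemma~\ref{lem:equivalencegraph} tells us that the induced dependency relation on these classes is acyclic, and this acyclicity is exactly what lets us order the blocks \(W_1, \dots, W_k\) compatibly so that each block depends only on itself and on earlier blocks.

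Concretely, first I would set \(W_1, \dots, W_k\) to be the equivalence classes in \(\Pi\) (with each variable \(x\) replaced by its sequence term \(x(n)\)). Since \(\mathcal{G}\) is a finite acyclic directed graph by Lemma~\ref{lem:equivalencegraph}, it admits a topological ordering of its vertices; I would index the classes \(W_1, \dots, W_k\) according to a reverse topological order, so that whenever class \(W_j\) depends on a distinct class \(W_i\), we have \(i < j\). This guarantees that no \(W_j\) depends on any \(W_\ell\) with \(\ell > j\) or on any distinct class with a larger index.

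The key step is then to verify that for each \(x(n) \in W_j\), the recurrence \(\cR[x]\) has the required shape \(M_j \cdot W_j^\top + P_j(W_1, \dots, W_{j-1})\). Here I would argue as follows. Every variable \(y\) appearing in \(\cR[x]\) is a variable on which \(x\) depends; each such \(y\) lies in some class \(W_i\). If \(y\) lies in a different class \(W_i \neq W_j\), then \(W_j\) depends on \(W_i\), so \(i < j\) by the topological ordering, placing the contribution in \(P_j(W_1, \dots, W_{j-1})\). If instead \(y \in W_j\), then \(x\) and \(y\) are mutually dependent (or equal), and since all variables are effective, the dependency of \(x\) on \(y\) must be \emph{linear} --- otherwise a non-linear edge would appear in a cycle through the class \(W_j\), contradicting effectiveness via Definition~\ref{def:effective}. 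Hence the intra-class contributions are linear in the variables of \(W_j\) and can be collected into the matrix term \(M_j \cdot W_j^\top\). Putting these two cases together yields exactly the decomposition demanded by Definition~\ref{def:solvability}.

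I expect the main obstacle to be the careful justification that all same-class dependencies are linear, which is where the effectiveness hypothesis does its real work: one must rule out a non-linear self-loop as well as a non-linear edge traversed within a mutual-dependence cycle, and confirm that such an edge would produce a cycle containing an \(N\)-label that reaches (or is visited by) the variable, contradicting both conditions of Definition~\ref{def:effective}. The remaining bookkeeping --- that \(\Pi\) genuinely partitions \(\operatorname{Vars}_n(\P)\), and that the reverse topological order is well defined --- follows directly from the equivalence-relation structure established before Lemma~\ref{lem:equivalencegraph} and from Lemma~\ref{lem:equivalencegraph} itself, so these should be routine to state.
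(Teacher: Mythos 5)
Your proposal is correct and follows essentially the same route as the paper's proof: partition the variables into the equivalence classes of \(\sim\), invoke the acyclicity of the class-level dependency graph (Lemma~\ref{lem:equivalencegraph}) to obtain a topological ordering, and use effectiveness to conclude that all intra-class dependencies are linear, which yields the decomposition required by Definition~\ref{def:solvability}. The only difference is that you spell out the case analysis on where each variable of \(\cR[x]\) lives in slightly more detail than the paper does.
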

\begin{proof}
By Lemma~\ref{lem:equivalencegraph}, the associated graph $\mathcal{G} = (\Pi, A')$ on the equivalence classes of \(\vars\P\) is directed and acyclic.
Thus there exists a topological ordering of $\Pi = \{ \pi_1, \dots, \pi_{|\Pi|} \}$ such that for every $(\pi_i, \pi_j) \in A'$ we have $i > j$.
Thus if $x \in \pi_i$ then $x$ does not depend on any variables in class $\pi_j$ for $j > i$.
Moreover, for each $\pi_i \in \Pi$, if $x, y \in \pi_i$ then $x$ cannot depend on $y$ non-linearly because every variable is effective (and all the variables in $\pi_i$ are mutually dependent).
Thus $\Pi$ evaluated at loop iteration $n$ partitions $\varsn{n}{\P}$ and satisfies the criteria in Definition~\ref{def:solvability}.
We thus conclude that $\cR$ is solvable.
\end{proof}

Together, Lemmas~\ref{lem:unsolvable:1}--\ref{lem:unsolvable:2} yield a new characterisation of unsolvable operators.

\medskip 

\begin{theorem}[Defective Characterisation]\label{thm:def-char}
Let $\P$ be a program with recurrence operator $\cR$, then $\cR$ is unsolvable if and only if \(D(\P)\) is non-empty.
\end{theorem}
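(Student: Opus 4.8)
The plan is to prove the biconditional in Theorem~\ref{thm:def-char} by simply assembling the two lemmas already at hand, since together they exhaust both directions of the logical equivalence. The theorem states that $\cR$ is unsolvable if and only if $D(\P)$ is non-empty, and the strategy is to recognise that Lemma~\ref{lem:unsolvable:1} and Lemma~\ref{lem:unsolvable:2} are precisely the two implications needed, modulo taking contrapositives.

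For the forward direction, I would prove the contrapositive: assume $D(\P)$ is empty, so that every program variable is effective. Then Lemma~\ref{lem:unsolvable:2} applies directly and yields that $\cR$ is solvable. Hence, if $\cR$ is unsolvable, $D(\P)$ must be non-empty. For the reverse direction, the implication is immediate from Lemma~\ref{lem:unsolvable:1}: if $D(\P)$ is non-empty, so that there is at least one defective variable, then $\cR$ is unsolvable. Thus both directions follow without any new argument.

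Concretely, I would write: \emph{Suppose $\cR$ is unsolvable. By the contrapositive of Lemma~\ref{lem:unsolvable:2}, $\P$ has at least one defective variable, so $D(\P)$ is non-empty. Conversely, if $D(\P)$ is non-empty, then Lemma~\ref{lem:unsolvable:1} immediately gives that $\cR$ is unsolvable.} This closes the proof. Because the substantive work has been pushed into the three preceding lemmas, the proof of the theorem itself is a one-line synthesis.

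The main obstacle is therefore not in this theorem at all, but resides in the supporting lemmas that I would assume as given. In particular, Lemma~\ref{lem:unsolvable:2} carries the real weight: establishing that the quotient graph $\mathcal{G}$ on equivalence classes is acyclic (Lemma~\ref{lem:equivalencegraph}), extracting a topological order, and verifying that the induced partition of $\operatorname{Vars}_n$ meets the block-triangular criterion of Definition~\ref{def:solvability} — crucially that within each class all dependencies are linear (using effectiveness) and that cross-class dependencies respect the ordering. Once those facts are in place, as they are in the excerpt, the theorem is a routine combination and presents no further difficulty.
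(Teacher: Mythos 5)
Your proposal is correct and matches the paper exactly: the paper derives Theorem~\ref{thm:def-char} as the immediate combination of Lemma~\ref{lem:unsolvable:1} (for the reverse implication) and the contrapositive of Lemma~\ref{lem:unsolvable:2} (for the forward implication), with all substantive work residing in those lemmas. No difference in approach.
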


\medskip

In Algorithm~\ref{alg:nmcv-alg} we provide a polynomial time algorithm that constructs both $E(\mathcal{P})$ and $D(\mathcal{P})$ given a program and a recurrence operator.
We use the initialism ``DFS" for the \emph{depth-first search} procedure.
Algorithm~\ref{alg:nmcv-alg} terminates in polynomial time as both the construction of the dependency graph and depth-first search exhibit polynomial time complexity.
The procedure searches for cycles in the dependency graph with at least one non-linear edge (labelled by \(N\)).
All variables that reach such cycles are, by definition, defective.

\begin{algorithm}[t]
\caption{Construct $E(\P)$ and $D(\P)$ from program $\P$ with operator $\cR$.}
\begin{algorithmic}
    \STATE \re{Construct the dependency graph $G = (\vars\P, A, \mathcal{L})$ of $\P$ with respect to $\cR$.}
    \STATE $D(\mathcal{P}) \leftarrow \emptyset$
    \FOR{$(x, y) \in A$ where $\mathcal{L}(x,y) = N$} 
        \IF {$x = y$}
            \STATE $\texttt{predecessor} \leftarrow \emptyset$
            \STATE $\text{DFS}(x, \texttt{predecessor})$
            \STATE $D(\mathcal{P}) \leftarrow D(\mathcal{P}) \cup \texttt{predecessor}$
        \ENDIF
        \IF {$x \neq y$}
            \STATE $\texttt{predecessor} \leftarrow \emptyset$
            \STATE $\text{DFS}(y, \texttt{predecessor})$
            \IF {$x \in \texttt{predecessor}$}
                \STATE $D(\mathcal{P}) \leftarrow D(\mathcal{P}) \cup \texttt{predecessor}$
            \ENDIF
        \ENDIF
    \ENDFOR
    \STATE $E(\mathcal{P}) \leftarrow \vars\P \setminus D(\mathcal{P})$
\end{algorithmic}
\label{alg:nmcv-alg}
\end{algorithm}

In what follows, we focus on programs with unsolvable recurrence operators, or equivalently by Theorem~\ref{thm:def-char}, the case where \(\D(\P) \neq \emptyset \).
The characterisation of unsolvable operators in terms of defective variables and our polynomial algorithm to construct the set of defective variables is the foundation for our approach synthesising invariants in the presence of unsolvable recurrence operators in Section~\ref{sec:invariants}.

\medskip 

\begin{remark}
The recurrence operator \(\mathcal{R}[x]\) for an effective variable \(x\) will admit a closed-form solution for every initial value \(x_0\).
For the avoidance of doubt, the same cannot be said for the recurrence operator of a defective variable.
However, it is possible that a set of initial values will lead to a closed-form expression as a C-finite sequence: consider a loop with defective variable \(x\) and update 
 \(x\leftarrow x^2\) and initialisation \(x_0 \leftarrow 0\) or \(x_0\leftarrow \pm 1\).
\end{remark}
\section{Synthesising Invariants}
\label{sec:invariants}

In this section we propose a new technique to {\it synthesise invariants for programs with unsolvable recurrence operators}.
The approach is based on our new characterisation of unsolvable operators in terms of defective monomials (Section~\ref{sec:effective-deffective}).

For the remainder of this section we fix a program $\P$ with an unsolvable recurrence operator $\cR$, or equivalently with $D(\P) \neq \emptyset$.
We start by extending the notions of \emph{effective} and \emph{defective} from program variables to monomials of program variables.
Let $\mathcal{E}$ be the set of \emph{effective monomials} given by
\[
    \mathcal{E}(\P) = \mleft\{ \prod_{x \in E(\P)} x^{\alpha_x} \mid \alpha_x \in \N \mright\}.
\]
The complement, the \emph{defective monomials}, is given by 
$\mathcal{D}(\P) := \mathcal{M}(\P) \setminus \mathcal{E}(\P)$.
The difficulty with defective variables is that in general they do not admit closed-forms.
However, \hl{linear combinations of defective monomials} may allow for closed-forms as illustrated in previous examples.
The main idea of our technique for invariant synthesis in the presence of defective variables is to find such polynomials.
We fix a \emph{candidate polynomial} called $\s{n}$ based on an arbitrary degree $d \in \N$:

\begin{equation}
    \label{eqn:sn-eq}
   \s{n} = \sum_{W \in \D_n(\P)\restriction_d} c_W W, 
\end{equation}
where the coefficients $c_W\in\R$ are unknown real constants.
We use \(\D_n(\P)\restriction_d\) to indicate the set of \emph{defective monomials of degree at most $d$}.

\medskip
 
\begin{example} For $\P_\square$ in Figure~\ref{fig:running:squares} we have
\(\D_n(\P_\square)\restriction_1 = \{x,y\}\),
and \(\D_n(\P_\square)\restriction_2 = \{x,y,x^2, y^2, xy, xz, yz\}\).
\end{example}

\medskip

On the one hand, all variables in $\s{n}$ are defective; however, $\s{n}$ may admit a closed-form.
This occurs if \(\s{n}\) obeys a ``well-behaved" recurrence equation; that is to say,
an inhomogeneous recurrence equation where the inhomogeneous component is given by a linear combination of effective monomials.
In such instances the recurrence takes the form
\begin{equation}
    \label{eqn:main-rec}
    \s{n{+}1} = \kappa\s{n} + \sum_{M \in \cE_{n}(\P)} c_M M
\end{equation}

where the coefficients $c_M$ are unknown.
Thus an intermediate step towards our goal of synthesising invariants is to determine whether there are constants $c_M, c_W, \kappa \in \R$ that satisfy the above equations.
If such constants exist then we come to our final step: solving a first-order inhomogeneous recurrence relation.
There are standard methods available to solve first-order inhomogeneous recurrences of the form $\s{n{+}1} = \kappa \s{n} + h(n)$, where $h(n)$ is the closed-form of $\sum_{M \in \cE_{n}(\P)}c_M M$, see e.g.,~\cite{kauers2011concrete}.
We note \(h(n)\) is computable and an exponential polynomial since it is determined by a linear sum of effective monomials. 
Thus $\seq{S(n)}{n}$ is a C-finite sequence.

\medskip

\begin{remark}
Observe that the sum on the right-hand side of equation~\eqref{eqn:main-rec} is finite, since  all but finitely many of the coefficients \(c_M\) are zero.
Further, the coefficient $c_M$ of monomial $M$ is non-zero only if $M$ appears in $\cR[S]$.
\end{remark}

\medskip

Going further, in equation~\eqref{eqn:main-rec} we express $\s{n{+}1}$ in terms of a polynomial in $\varsn{n}\P$ with unknown coefficients $c_M,c_W$, and $\kappa$.
An alternative expression for $\s{n{+}1}$ in $\varsn{n}\P$ is given by the recurrence operator $\s{n{+}1} = \cR[S]$.
Taken in combination, we arrive at the following formula
\begin{equation*}
    \cR[S] - \kappa\s{n} - \sum_{M \in \cE_{n}(\P)} c_M M = 0, 
\end{equation*}
yielding a polynomial in $\varsn{n}\P$.
Thus all the coefficients in the above formula are necessarily zero as the polynomial is identically zero.
Therefore \emph{all} solutions to the unknowns $c_M, c_W$, and $\kappa$ are computed by solving a (quadratic) system of equations.
\re{
The main complexity of our invariant synthesis technique lies in solving the quadratic system.
In the candidate polynomial, every monomial in defective variables (of degree at most $d$) is associated with a unique unknown coefficient.
Hence, the size of the quadratic system can be polynomial in $d$ and exponential in the number of defective variables.
}

\medskip

\begin{example}\label{ex:squares_solved}
Consider the following illustration of our invariant synthesis procedure.
Recall program $\P_\square$ from Figure~\ref{fig:running:squares}:
\begin{algorithmic}
\STATE \(z \leftarrow 0\)
\WHILE {$\star$}
\STATE \(z \leftarrow 1 - z\)
\STATE \(x \leftarrow 2x + y^2 + z\)
\STATE \(y \leftarrow 2y - y^2 + 2z\)
\ENDWHILE
\end{algorithmic}
From Algorithm~\ref{alg:nmcv-alg} we obtain \(E(\P_\square) = \{z\}\) and \(D(\P_\square) = \{x, y\}\).
Because $D(\P_\square) \neq \emptyset$, we deduce using Theorem~\ref{thm:def-char} that the associated operator \(\cR\) is unsolvable.
Consider the {candidate} \(\s{n} = ax(n) + by(n)\) with unknowns \(a,b\in\R\).
The recurrence for $\s{n}$ given by $\cR$ is
\begin{multline*}
    \s{n{+}1} = \cR[S] = a\cR[x] + b\cR[y] \\ = a + 2b +2ax(n) + 2by(n) -(a+2b)z(n) + (a-b)y^2(n).
\end{multline*}

We next express $\s{n{+}1}$ in terms of an inhomogeneous recurrence equation (cf. equation~\eqref{eqn:main-rec}).
When we substitute for \(\s{n}\), we obtain
\begin{equation*}
\s{n{+}1} = \kappa (ax(n) + by(n)) + (cz(n) + d)
\end{equation*}
where the coefficients in the inhomogeneous component are unknown.
We then combine the preceding two equations (for brevity we suppress the loop counter \(n\) in the program variables \(x,y,z\)) and derive 
\begin{equation*}
(a + 2b - d) + (-a - c - 2b)z + (2a - \kappa a)x + (2b - \kappa b)y + (a - b)y^2 = 0.
\end{equation*}
Thus we have a polynomial in the program variables that is identically zero.
Therefore, all the coefficients in the above equation are necessarily zero.
We then solve the resulting system of quadratic equations, which leads to the non-trivial solution \(a=b\), \(\kappa=2\), \(d=3a\), and \(c=-3a\).
We substitute this solution back into the recurrence for \(\cR[S]\) and find
\begin{equation*}
    \s{n{+}1} = 2\s{n} + 3a(1-z(n)) = 2\s{n} + 3a \frac{1 + (-1)^n}{2}.
\end{equation*}
Here, we have used the closed-form solution \(z(n) = {1}/{2} - {(-1)^n}/{2}\) of the effective variable \(z\).
We can compute the solution of this inhomogeneous first-order recurrence equation.
In the case that \(a=1\),
we have \(\s{n} = 2^n (\s{0} + 2) - {(-1)^n}/{2} -{3}/{2}\). 
Therefore, the following identity holds for each $n \in \N$:
\begin{equation*}
    x(n) + y(n) = 2^n(x(0) + y(0) + 2) - {(-1)^n}/{2} - {3}/{2}
\end{equation*}
and so we have synthesised the closed-form of \(x+y\) for program \(\P_\square\) of Figure~\ref{fig:running:squares}.
\end{example}

\subsection{Solution Space of Invariants for Unsolvable Operators}

Given a program and a recurrence operator, our invariant synthesis technique is relative-complete with respect to the degree $d$ of the candidate $\s{n}$.
This means, for a fixed degree $d \in \N$, our approach is in theory able to compute \emph{all} polynomials of defective variables with maximum degree $d$ that satisfy a ``well-behaved" recurrence; that is, a first-order recurrence equation of the form \eqref{eqn:main-rec}.
This holds because of our reduction of the problem to a system of quadratic equations for which all solutions are computable.
\re{It is not guaranteed that a solution does exist.}
\re{In that case,} our technique can rule out the existence of well-behaved polynomials of defective variables of degree at most $d$ if the resulting system has no (non-trivial) solutions.

\re{
\begin{example}
The following loop models the \emph{logistic map}~\cite{May1976} which is well-known for its chaotic behaviour.
\begin{algorithmic}
\WHILE {$\star$}
\STATE \(x \leftarrow rx(1-x)\)
\ENDWHILE
\end{algorithmic}
The single variable $x$ is defective due to its non-linear self-dependency.
For most values of $r$ and initial values of $x$ the logistic map does not admit an analytical solution and well-behaved polynomials in $x$ do not exist.~\cite{Maritz2020}.
Hence, our invariant synthesis technique provides no solution for candidates of fixed degrees.
\end{example}
}

Let $\P$ be a program with program variables $\vars\P = \{x_1, \dots, x_k \}$.
The set of polynomials $P$ with $P(x_1(n), \dots, x_k(n)){=}0$ for all $n \in \N$ form an ideal, the \emph{invariant ideal} of $\P$.
The requirement of closed-forms is the main obstacle for computing a basis for the invariant ideal in the presence of defective variables.
Our work introduces a method that includes defective variables in the computation of invariant ideals, via the following steps of deriving the {\it polynomial invariant ideal of an unsolvable loop:}
\begin{itemize}
\item For every effective variable $x_i$, let $f_i(n)$ be its closed-form and assume $h(n)$ is the closed-form for some candidate $S$ given by a  polynomial in defective variables.
\item Let $\lambda_1^n, \ldots, \lambda_s^n$ be the exponential terms in all $f_i(n)$ and $h(n)$.
Replace the exponential terms in all $f_i(n)$ as well as $h(n)$ by fresh constants to construct the polynomials $g_i(n)$ and $l(n)$ respectively.
\item  Next, construct the set $R$ of polynomial relations among all exponential terms, as explained in Section~\ref{sec:preliminaries}.
Then, the ideal
\begin{equation*}
     I(\{ x_i - g_i(n) \mid x_i \in E(\P) \} \cup \{ S - l(n) \} \cup R) \cap \overline{\Q}[x_1, \dots, x_k]
\end{equation*}
contains precisely all polynomial relations among program variables implied by the equations $\{ x_i = f_i(n) \} \cup \{ S = g(n) \}$ in the theory of polynomial arithmetic.
\item A finite basis for this ideal is computed using techniques from Gr\"obner bases and elimination theory.
This step is similar to the case of the invariant ideal for solvable loops, see~e.g.,~\cite{Rodriguez04,Kovacs08}.
\end{itemize}
In conclusion, we infer a \emph{finite representation of the ideal of  polynomial invariants for loops with unsolvable recurrence operators}.

\section{Synthesising Solvable Loops from Unsolvable Loops}
\label{sec:loopsynthesis}

In previous sections, we introduced a new technique to compute invariants for unsolvable loops; that is, loops containing defective variables.
An orthogonal challenge is to synthesise a solvable loop from an unsolvable loop that preserves or over-approximates given specifications.

In this section we establish, with Algorithm~\ref{alg:loop-synthesis}, a new method to synthesise a solvable loop $\P'$ from an unsolvable loop $\P$.
The solvable loop $\P'$ over-approximates the behaviour of $\P$ in the sense that every polynomial invariant of $\P'$ is an invariant of $\P$.
Moreover, we show the invariants among effective variables of $\P$ and $\P'$ coincide.
The following example illustrates the main idea leading to Algorithm~\ref{alg:loop-synthesis}.

\medskip

\begin{example}\label{ex:squares_synthesized}
Example~\ref{ex:squares_solved} showed how to synthesise the polynomial $S = x + y$ of defective variables $x$ and $y$ for the loop in Figure~\ref{fig:running:squares} such that $S$ admits a closed-form.
In this case, the polynomial of program variables $S$ satisfies the linear inhomogeneous recurrence $S(n+1) = 2S(n) - 3z(n) + 3$.
We can use this recurrence to construct a solvable loop from the unsolvable from Figure~\ref{fig:running:squares} that captures the dynamics of the only effective variable $z$ as well as $S$:

\medskip

\noindent
\begin{minipage}{\textwidth}
\begin{minipage}{0.3\textwidth}
\begin{algorithmic}
    \STATE \(z \leftarrow 0\)
    \WHILE {$\star$}
        \STATE \(z \leftarrow 1 - z\)
        \STATE \(x \leftarrow 2x + y^2 + z\)
        \STATE \(y \leftarrow 2y - y^2 + 2z\)
    \ENDWHILE
\end{algorithmic}
\end{minipage}
\begin{minipage}{0.2\textwidth}
\centering
{\footnotesize Algorithm~\ref{alg:loop-synthesis}} \\
$\longmapsto$
\end{minipage}
\begin{minipage}{0.4\textwidth}
\begin{algorithmic}
    \STATE \(z \leftarrow 0\)
    \STATE \(\texttt{s} \leftarrow x_0 + y_0\)
    \WHILE {$\star$}
        \STATE \(
            \begin{pmatrix}
            z \\ \texttt{s}
            \end{pmatrix}
            \leftarrow
            \begin{pmatrix}
            1 - z \\ 2\texttt{s} - 3z + 3
            \end{pmatrix}
        \)
    \ENDWHILE
\end{algorithmic}
\end{minipage}
\end{minipage}

\medskip

Our algorithmic approach, as given in Algorithm~\ref{alg:loop-synthesis}, synthesises the solvable loop from the unsolvable loop on the left. Such a synthesis step is implemented within our tool (Section~\ref{section:experiments}), allowing us to synthesize the above solvable loop in about $1$ second.
\end{example}

\begin{algorithm}[H]
\caption{Solvable Loop Synthesis}
    \begin{algorithmic}[1]
        \REQUIRE Unsolvable loop $\P$ with recurrence operator $\cR$, degree $d \in \N$
        \ENSURE Solvable loop $\P'$
        \STATE Compute $E(\P)$ and $D(\P)$ using Algorithm~\ref{alg:nmcv-alg}.
        \STATE Fix candidate polynomial $\s{n}$ of degree $d$ (as in Section~\ref{sec:invariants} \eqref{eqn:sn-eq}).
        \STATE Solve for coefficients in \\ \quad $\s{n{+}1} = \kappa\s{n} + \sum_{M \in \cE_{n}(\P)} c_M M$ (as in Section~\ref{sec:invariants} \eqref{eqn:main-rec}).
        \STATE $\texttt{initial} = []$
        \STATE $\texttt{body\_left} = []$
        \STATE $\texttt{body\_right} = []$
        \STATE \COMMENT{Add all effective variables to new loop}
        \FOR{$x \in E(\P)$}
            \STATE $\texttt{initial.append}(x \leftarrow x_0)$
            \STATE $\texttt{body\_left.append}(x)$
            \STATE $\texttt{body\_right.append}(\cR[x])$
        \ENDFOR
        \STATE \COMMENT{Add well-behaved combination of defective monomials}
        \IF{$S \neq 0$}
            \STATE Choose a fresh symbol $\texttt{s}$.\label{alg:loop-synthesis:fresh}
            \STATE $\texttt{initial.append}(\texttt{s} \leftarrow \s{0})$
            \STATE $\texttt{body\_left.append}(\texttt{s})$
            \STATE $\texttt{body\_right.append}(\kappa \texttt{s} + \sum_{M \in \cE(\P)} c_M M)$\label{alg:loop-synthesis:s}
        \ENDIF
        \STATE $\P' \leftarrow$ $\texttt{initial}$ \lq\lq while $\star$ do\rq\rq\ $\texttt{body\_left} \leftarrow \texttt{body\_right}$ \lq\lq end while\rq\rq
        \RETURN $\P'$
    \end{algorithmic}
\label{alg:loop-synthesis}
\end{algorithm}

The inputs for Algorithm~\ref{alg:loop-synthesis} are an unsolvable loop $\P$ with recurrence operator $\cR$ and a fixed degree $d \in \N$; the algorithm outputs a solvable loop $\P'$, if it exist.
Briefly, the algorithm invokes the invariant synthesis procedure from Section~\ref{alg:loop-synthesis} (for degree $d$) and constructs the loop $\P'$ from $\P$ by removing all the defective variables and then introducing a new variable $\texttt{s}$ that models an invariant among defective monomials, if such an invariant exists.
The recurrence operator $\cR'$ associated with the synthesised loop $\P'$ is the canonical recurrence operator: $\cR'$ maps every program variable to its assignment.

\medskip

\begin{lemma}[Soundness]\label{lemma:synthesis:solvable}
The loop $\P'$ returned by Algorithm~\ref{alg:loop-synthesis} is solvable.
Moreover, we have $E(\P) \subseteq \vars{\P'}$.
\end{lemma}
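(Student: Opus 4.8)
The plan is to verify the two claims separately, relying on the explicit construction of $\P'$ in Algorithm~\ref{alg:loop-synthesis} together with the effective/defective characterisation from Section~\ref{sec:effective-deffective}. First I would describe the variable set of $\P'$: by inspection of the algorithm, $\vars{\P'}$ consists of all the effective variables $x\in E(\P)$ (added in the loop over $E(\P)$) together with, possibly, the single fresh symbol $\texttt{s}$. This immediately gives the inclusion $E(\P)\subseteq\vars{\P'}$, establishing the second sentence of the lemma without further work.

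\textbf{Solvability.} The substantive claim is that $\P'$ is solvable; by Theorem~\ref{thm:def-char} it suffices to show that $D(\P')=\emptyset$, i.e.\ that \emph{every} variable of $\P'$ is effective with respect to the canonical operator $\cR'$. I would handle the two kinds of variables in turn. For an effective variable $x\in E(\P)$, its assignment in $\P'$ is $\cR[x]$, and by Corollary~\ref{cor:effective} this is a polynomial purely in effective variables of $\P$. Hence in the dependency graph of $\P'$ every edge out of $x$ points to another variable of $E(\P)$, and the labels ($L$ or $N$) are inherited from the dependency graph of $\P$. The key point is that the subgraph of $\P$ induced on $E(\P)$ contains no cycle carrying an $N$-edge (otherwise such a variable would be defective, contradicting $x\in E(\P)$, via Definition~\ref{def:effective}); since $\cR'$ adds no new dependencies \emph{among} effective variables, this property is preserved, so no $x\in E(\P)$ lies on or can reach an $N$-labelled cycle in $\P'$.

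For the fresh variable $\texttt{s}$ (when $S\neq 0$), its update in $\P'$ is $\kappa\,\texttt{s} + \sum_{M\in\cE(\P)} c_M M$ by line~\ref{alg:loop-synthesis:s}. The crucial observation is that this right-hand side is linear in $\texttt{s}$ (the coefficient $\kappa$ multiplies $\texttt{s}$ to the first power) and otherwise involves only effective monomials, i.e.\ polynomials in $E(\P)$. Thus $\texttt{s}$ depends linearly on itself and linearly/non-linearly only on effective variables, and no effective variable depends on $\texttt{s}$ at all. Consequently $\texttt{s}$ sits atop the dependency order: any cycle through $\texttt{s}$ would have to return to $\texttt{s}$ through effective variables, which is impossible since those never point back to $\texttt{s}$; and the only self-loop at $\texttt{s}$ is $L$-labelled. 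Hence $\texttt{s}$ satisfies both conditions of Definition~\ref{def:effective} and is effective.

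\textbf{Main obstacle.} The routine parts are the bookkeeping of $\vars{\P'}$ and the self-loop label at $\texttt{s}$. The step requiring genuine care is arguing that passing from $\cR$ to $\cR'$ introduces no new $N$-labelled cycles among the effective variables: one must confirm that restricting the updates to $E(\P)$ (as licensed by Corollary~\ref{cor:effective}) cannot create a defective variable that was effective in $\P$. I would make this precise by observing that the induced dependency subgraph of $\P'$ on $E(\P)$ is exactly the subgraph of $\P$'s dependency graph induced on $E(\P)$, so any offending cycle in $\P'$ would already witness defectiveness in $\P$, a contradiction. With both variable classes shown effective, $D(\P')=\emptyset$, and solvability follows from Theorem~\ref{thm:def-char}.
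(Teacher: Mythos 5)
Your proposal is correct and follows essentially the same route as the paper's proof: reduce solvability to the emptiness of $D(\P')$ via Theorem~\ref{thm:def-char}, observe that the dependency subgraph induced on $E(\P)$ is unchanged (so any offending $N$-labelled cycle would already witness defectiveness in $\P$), and note that the fresh variable $\texttt{s}$ has only a linear self-loop as incoming edge and therefore cannot lie on or reach a non-linear cycle. The explicit appeal to Corollary~\ref{cor:effective} and the remark that the induced subgraph is \emph{exactly} preserved are welcome clarifications but do not change the argument.
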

\begin{proof}
Using our characterisation of solvable and unsolvable loops in terms of effective and defective variables (Theorem~\ref{thm:def-char}), we show that all variables in $\P'$ are effective.
The variables of $\P'$ consist of the effective variables of $\P$ as well as the fresh variable $\texttt{s}$: $\vars{\P'} = E(\P) \cup \{ \texttt{s} \}$, or  $\vars{\P'} = E(\P)$ if no invariant among defective monomials with degree $d$ exists.

First, every $x \in E(\P)$ necessarily remains effective for the synthesised program $\P'$:
in the dependency graph of $\P'$, the variable $x$ cannot occur in a cycle containing $\texttt{s}$, because $\texttt{s}$ is a fresh variable and hence cannot appear in the recurrence $\cR[y]$ for any $y \in E(\P)$ (Algorithm~\ref{alg:loop-synthesis} line \ref{alg:loop-synthesis:fresh}).
Hence, if \(z \in E(\P)\cap D(\P')\), then there must exist a cycle in the dependency graph of $\P'$ with a non-linear edge \((x,y)\) (i.e., \(\mathcal{L}(x,y)=N\)) such that every vertex in the cycle is in $E(\P)$.
Consequently, this cycle is also present in the dependency graph of the original program $\P$.
This means that not all variables in $E(\P)$ are effective, which is a contradiction.

Second, the variable $\texttt{s}$ is effective.
Because $\texttt{s}$ is a fresh variable, the only incoming edge of $\texttt{s}$ in the dependency graph of $\P'$ is a linear self-loop (Algorithm~\ref{alg:loop-synthesis} line \ref{alg:loop-synthesis:s}).
Hence $\texttt{s}$ cannot occur in a cycle with a non-linear edge.
Moreover, all outgoing edges point to effective variables.
Thus $\texttt{s}\in E(\P')$ is effective.
\end{proof}

With the next two lemmas, we prove that the  loop $\P'$ synthesised by Algorithm~\ref{alg:loop-synthesis} over-approximates the invariants of the unsolvable loop $\P$.
Let $\Inv{\P}$ and $\Inv{\P'}$ denote the invariant ideals of $\P$ and $\P'$ respectively.
The next lemma establishes that the synthesised loop $\P'$ is \emph{complete with respect to invariants among effective variables}.

\medskip

\begin{lemma}[Completeness with respect to Effective Variables]\label{lemma:synthesis:effective}
The invariant ideals of $\P$ and $\P'$ coincide when restricted on the effective variables of $\P$.
That is,
\begin{equation*}
    \Inv{\P} \cap \overline{\Q}[E(\P)] =  \Inv{\P'} \cap \overline{\Q}[E(\P)].
\end{equation*}
\end{lemma}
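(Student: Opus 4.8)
The plan is to show that the effective variables of \(\P\) evolve identically in \(\P\) and in the synthesised loop \(\P'\), so that any polynomial relation involving effective variables alone holds along \(\P\) precisely when it holds along \(\P'\). First I would unwind the definitions of the two restricted invariant ideals. Writing \(E(\P) = \{x_1,\dots,x_m\}\) and letting \(x_i(n)\) denote the value of \(x_i\) in the \(n\)th iteration of \(\P\), a polynomial \(P \in \overline{\Q}[E(\P)]\) lies in \(\Inv{\P} \cap \overline{\Q}[E(\P)]\) exactly when \(P(x_1(n),\dots,x_m(n)) = 0\) for every \(n \in \N\); the analogous statement with sequences \(x_i'(n)\) characterises \(\Inv{\P'} \cap \overline{\Q}[E(\P)]\). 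Hence the lemma reduces to the single identity \(x_i(n) = x_i'(n)\) for all \(i\) and all \(n\).

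The key step is to prove this identity by induction on \(n\), and here the essential ingredient is Corollary~\ref{cor:effective}. By construction (the first \textbf{for}-loop of Algorithm~\ref{alg:loop-synthesis}), the loop \(\P'\) retains each effective variable \(x_i\) with the same initial value \(x_i(0)\) and the same update \(\cR[x_i]\); the only additional variable is the fresh symbol \(\texttt{s}\), which appears neither in the initial values nor in the updates of any effective variable, and whose own update does not feed back into the effective variables. At \(n = 0\) the two trajectories agree because the initial values coincide. For the inductive step, Corollary~\ref{cor:effective} guarantees that \(\cR[x_i]\) is a polynomial in the effective variables alone, so \(x_i(n{+}1)\) is obtained by evaluating \(\cR[x_i]\) at the values \(x_1(n),\dots,x_m(n)\); the very same expression computes \(x_i'(n{+}1)\) from \(x_1'(n),\dots,x_m'(n)\) in \(\P'\). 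Applying the inductive hypothesis \(x_j(n) = x_j'(n)\) then yields \(x_i(n{+}1) = x_i'(n{+}1)\), closing the induction.

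With the identity \(x_i(n) = x_i'(n)\) in hand, for any \(P \in \overline{\Q}[E(\P)]\) we have \(P(x_1(n),\dots,x_m(n)) = P(x_1'(n),\dots,x_m'(n))\) for every \(n\), so \(P\) vanishes along the effective trajectory of \(\P\) if and only if it vanishes along that of \(\P'\). This is exactly the claimed equality \(\Inv{\P} \cap \overline{\Q}[E(\P)] = \Inv{\P'} \cap \overline{\Q}[E(\P)]\).

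The main obstacle will be the careful justification that the effective subsystem is genuinely \emph{closed} in both loops: no defective variable of \(\P\) and no fresh variable \(\texttt{s}\) of \(\P'\) may influence the effective variables. This decoupling is precisely what Corollary~\ref{cor:effective} and the freshness of \(\texttt{s}\) in Algorithm~\ref{alg:loop-synthesis} supply, and once these facts are invoked the induction itself is routine; the delicate point is simply to confirm that restricting to \(\overline{\Q}[E(\P)]\) isolates identical effective dynamics in \(\P\) and \(\P'\).
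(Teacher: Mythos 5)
Your proposal is correct and follows essentially the same route as the paper's proof: both arguments reduce the claim to showing that the sequences of effective-variable values in $\P$ and $\P'$ coincide, using Corollary~\ref{cor:effective} to ensure no defective variable enters $\cR[x]$ and the freshness of $\texttt{s}$ to ensure it does not enter $\cR'[x']$. The only difference is presentational: you spell out the induction on $n$ that the paper leaves implicit in its appeal to the two recurrence systems inducing the same sequences.
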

\begin{proof}
By the construction of $\P'$, every $x \in E(\P)$ is also a variable of $\P'$.
To distinguish the variable $x$ in $\P$ from the variable $x$ in $\P'$ we refer to the latter by $x'$.
We will show that for every $x \in E(\P)$ the sequences $\seq{x(n)}{n}$ and $\seq{x'(n)}{n}$ coincide.
If this is the case, the polynomial invariants for the programs $\P$ and $\P'$ among the variables $E(\P)$ necessarily coincide as well.

Let $\cR$ and $\cR'$ be the recurrence operators associated to $\P$ and $\P'$, respectively.
For every $x \in E(\P)$ we have $\cR[x] = \cR'[x']$ and $x_0 = x'_0$ by the construction of~$\P'$.
Furthermore, by Corollary~\ref{cor:effective}, defective variables cannot occur in $\cR[x]$ for any $x \in E(\P)$.
Moreover, the fresh variable $\texttt{s}$ cannot occur in $\cR'[x']$ by the construction of~$\P'$.
Hence the two systems of first-order recurrences $\{ x(n{+}1) = \cR[x] \mid x \in E(\P) \}$ and $\{ x'(n{+}1) = \cR'[x'] \mid x \in E(\P) \}$ together with the initial values $\{ x_0 \mid E(\P) \}$ and $\{ x'_0 \mid E(\P) \}$ induce the same sequences $\seq{x(n)}{n}$ and $\seq{x'(n)}{n}$ for all $x \in E(\P)$.
\end{proof}

We note that when  an invariant among defective monomials of degree \(d\) does not exist, the variables of the synthesised loop $\P'$ are precisely the effective variables of $\P$.
In this case,  Lemma~\ref{lemma:synthesis:effective} fully characterises the relationship between the invariants of $\P$ and $\P'$. 

In the following, let us consider the complementary case, namely when an invariant among the  defective monomials of $\P$ exists.
Hence,  the synthesised loop $\P'$ contains the additional fresh variable $\texttt{s}$ modelling the behaviour of this invariant. 
With the next lemma, we confirm that the synthesised loop $\P'$ is indeed a \emph{sound over-approximation of the unsolvable loop $\P$}.
We show that every invariant of $\P'$ is also an invariant of $\P$.
The program variable $\texttt{s} \in \vars{\P'}$ introduced by Algorithm~\ref{alg:loop-synthesis} is however not a program variable of $\P$; nevertheless, $\texttt{s}$  models the polynomial $S$ of defective variables in $\P$.
Hence, to compare the invariant ideals of $\P$ and $\P'$, we need to \lq\lq substitute\rq\rq\ $\texttt{s}$ by the polynomial of defective variables it models.
This can be done by adding the equation $\texttt{s} = S$ ($\texttt{s} - S = 0$) to the invariant ideal of $\P'$ and restricting the resulting ideal to \vars\P:

\begin{equation}\label{eq:synthesized-ideal}
I( \Inv{\P'} \cup \{ \texttt{s} - S \} ) \cap \overline{\Q}[\vars\P].
\end{equation}

\medskip

\begin{lemma}[Over-Approximation]\label{lemma:synthesis:approx}
Let $J$ be the ideal in \eqref{eq:synthesized-ideal} constructed from the invariant ideal of $\P'$ by replacing the program variable $\texttt{s}$ by the polynomial of defective variables it models. Then, $J \subseteq \Inv{\P}$.
\end{lemma}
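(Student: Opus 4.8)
The plan is to reduce everything to a single identity, namely that the fresh variable $\texttt{s}$ in $\P'$ and the polynomial $S$ in $\P$ produce the same sequence: $\texttt{s}(n) = \s{n}$ for all $n\in\N$, where $\texttt{s}(n)$ is the value of $\texttt{s}$ at the $n$th iteration of $\P'$ and $\s{n}$ is $S$ evaluated along the $\P$-trajectory. I would establish this by a short induction. Both sequences satisfy the \emph{same} first-order inhomogeneous recurrence $u(n{+}1) = \kappa\, u(n) + \sum_{M\in\cE(\P)} c_M M(n)$: for $\s{n}$ this is precisely the defining recurrence~\eqref{eqn:main-rec} computed in Section~\ref{sec:invariants}, and for $\texttt{s}(n)$ this is the assignment produced in Algorithm~\ref{alg:loop-synthesis}, line~\ref{alg:loop-synthesis:s}. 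The initial values agree by construction, $\texttt{s}(0)=\s{0}$. Crucially, the inhomogeneous term involves only effective monomials, and by Lemma~\ref{lemma:synthesis:effective} each effective variable---hence each effective monomial $M$---induces the same sequence $M(n)$ in $\P$ and in $\P'$. The induction step is then immediate.

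With this identity secured, I would take an arbitrary $P\in J$ and unwind the ideal membership. By definition of $J$ in~\eqref{eq:synthesized-ideal} we may write $P = b\cdot(\texttt{s}-S) + \sum_i a_i f_i$ with $f_i\in\Inv{\P'}$ and $a_i,b\in\overline{\Q}[\vars\P\cup\{\texttt{s}\}]$. Fixing $n$, I apply the evaluation homomorphism sending each $x\in\vars\P$ to $x(n)$ and $\texttt{s}$ to $\texttt{s}(n)$. The factor $(\texttt{s}-S)$ maps to $\texttt{s}(n)-\s{n}=0$ by the identity above. Each $f_i$ lives in $\overline{\Q}[E(\P)\cup\{\texttt{s}\}]$; using Lemma~\ref{lemma:synthesis:effective} to identify the $\P$-values of effective variables with their (equal) $\P'$-values, $f_i$ evaluates to its value along the full $\P'$-trajectory at iteration $n$, which is $0$ since $f_i\in\Inv{\P'}$. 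Hence the entire right-hand side vanishes at iteration $n$.

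To finish, I observe that $P\in\overline{\Q}[\vars\P]$ contains no occurrence of $\texttt{s}$, so the substitution $\texttt{s}\mapsto\texttt{s}(n)$ does not affect it; the left-hand side is therefore exactly $P(x_1(n),\dots,x_k(n))$. We conclude $P(x_1(n),\dots,x_k(n))=0$ for every $n$, that is $P\in\Inv\P$, and since $P$ was arbitrary, $J\subseteq\Inv\P$. The main obstacle I anticipate is not any deep calculation but careful bookkeeping across the three polynomial rings ($\overline{\Q}[\vars\P]$, $\overline{\Q}[\vars{\P'}]$, and the combined ring $\overline{\Q}[\vars\P\cup\{\texttt{s}\}]$) and, in particular, justifying cleanly that substituting the $\P$-sequences into $f_i\in\Inv{\P'}$ yields zero---which is exactly where the matching of effective-variable sequences from Lemma~\ref{lemma:synthesis:effective} is indispensable. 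Once the identity $\texttt{s}(n)=\s{n}$ is in place, which is the conceptual core, the remaining ideal-membership manipulation is routine.
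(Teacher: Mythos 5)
Your proof is correct and follows essentially the same route as the paper's: both arguments rest on the coincidence of the effective-variable sequences of $\P$ and $\P'$ (imported from Lemma~\ref{lemma:synthesis:effective}) together with the identity $\texttt{s}(n)=S(n)$, and then transport vanishing from $\Inv{\P'}$ to $\P$ by substituting $S$ for $\texttt{s}$. If anything, your version is more careful than the paper's: you justify $\texttt{s}(n)=S(n)$ by an explicit induction rather than asserting it \lq\lq by construction\rq\rq, and you unwind a general element of $J$ as $b\cdot(\texttt{s}-S)+\sum_i a_i f_i$ before evaluating, whereas the paper only discusses the substituted generators $Q\{\texttt{s}\mapsto S\}$ for $Q\in\Inv{\P'}$.
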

\begin{proof}

As argued in the proof of Lemma~\ref{lemma:synthesis:effective}, for every $x \in E(\P)$, the sequences corresponding to the variable $x$ in both $\P$ and $\P'$ coincide; that is, $\seq{x(n)}{n} \equiv \seq{x'(n)}{n}$.
Furthermore, we have $\vars{\P'} = E(\P) \cup \{ \texttt{s} \}$.
The fresh variable $\texttt{s}$ in $\P'$ models the polynomial $S \in \overline{\Q}[\vars\P]$.
Let $\seq{\texttt{s}(n)}{n}$ be the sequence induced by the program variable $\texttt{s}$ in $\P'$ and, likewise, $\seq{S(n)}{n}$ the sequence induced by the polynomial $S \in \overline{\Q}[\vars\P]$.
Then, by the construction of $\P'$, we have $\seq{\texttt{s}(n)}{n} \equiv \seq{S(n)}{n}$.

Let $Q \in \Inv{P'}$.
\re{By the definition of the ideal $J$ in \eqref{eq:synthesized-ideal},} it holds that 
\begin{equation*}
    Q \in \Inv{P'} \iff Q\{\texttt{s} \mapsto S\}\in J
\end{equation*} (where the notation indicates that $S$ is substituted for $\texttt{s}$).

Now, $Q$ is a polynomial relation among the sequences induced by the variables in $\vars{\P'}$.
Because $\seq{x(n)}{n} \equiv \seq{x'(n)}{n}$ for every $x \in E(\P)$ and $\seq{\texttt{s}(n)}{n} \equiv \seq{S(n)}{n}$, the polynomial $Q\{\texttt{s} \mapsto S\}$ represents a relation among the sequences induced by the variables in $\vars\P$.
Hence we have $Q \in \Inv{\P}$.
\end{proof}

\begin{remark}
Our loop synthesis procedure given in Algorithm~\ref{alg:loop-synthesis} computes a single invariant among defective monomials (if such an invariant exists) of an unsolvable loop~$\P$.
The results in this section naturally generalise to multiple invariants among defective monomials, as follows: 
for every invariant $I$, add a fresh variable $\texttt{s}_I$ modelling the behaviour of $I$ to the synthesised program $\P'$.
Note that with each additional invariant \(I\) added, the dynamics of the synthesised loop \(\P'\) more closely resembles that of the unsolvable loop~\(\P\).
\end{remark}


\section{Adjustments for Unsolvable Operators in Probabilistic Programs} 
\label{sec:probabilistic}

\subsection{Defective Variables in Probabilistic Loop Models}
The works~\cite{Moosbruggeretal2022,BKS19} defined recurrence operators for probabilistic loops.
Specifically, a recurrence operator is defined for loops with polynomial assignments, probabilistic choice, and drawing from common probability distributions with constant parameters.
Recurrences for deterministic loops model the precise values of program variables.
For probabilistic loops, this approach is not viable, due to the stochastic nature of the program variables.
Thus a recurrence operator for a probabilistic loop models \emph{(higher) moments} of program variables.
As illustrated in Example~\ref{ex:NLMarkov-1}, the recurrences of a probabilistic loop are taken over expected values of program variable monomials.

The authors of \cite{Moosbruggeretal2022,BKS19} explicitly excluded the case of circular non-linear dependencies to guarantee computability.
However, in contrast to our notions in Sections~\ref{section:recurrences}, they defined variable dependence not on the level of recurrences but on the level of assignments in the loop body.
To use the notions of effective and defective variables for probabilistic loops, we follow the same approach and base the dependency graph on assignments rather then recurrences.
We illustrate the necessity of this adaptation in the following example.

\medskip 

\begin{example}\label{ex:prob-phenomena}
\re{
A probabilistic assignment $x \leftarrow a\ \{p\}\ b$ intuitively means that $x$ is assigned $a$ with probability $p$ and $b$ with probability $1{-}p$.
}
Consider the following probabilistic loop and associated set of first-order recurrence relations in terms of the expectation operator \(\E(\wc)\).

\medskip
\noindent
\begin{minipage}{\linewidth}
\begin{minipage}[b]{0.32\textwidth}
\begin{algorithmic}
\WHILE{$\star$}
\STATE \(y \leftarrow 4y(1-y)\)
\STATE \(x \leftarrow x-y\ \{1/2\}\ x+y\)
\ENDWHILE
\end{algorithmic}
\bigskip
\end{minipage}
\hfill\vline\hfill
\begin{minipage}[b]{0.62\textwidth}
\begin{align*}
    \E(y_{n+1}) &= 4\E(y_n) - 4\E(y_n^2) \\
    \E(x_{n+1}) &= \E(x_{n}) \\
    \E(x^2_{n+1}) &= \re{\frac{1}{2}\E((x_n - y_{n+1})^2) + \frac{1}{2}\E((x_n + y_{n+1})^2)} \\
    &= \E(x^2_{n}) + \E(y^2_{n+1})
\end{align*}
\end{minipage}
\end{minipage}
\bigskip

It is straightforward to see that variable \(y\) is defective from the deterministic update 
\(y \leftarrow 4y(1-y)\) with its characteristic non-linear self-dependence.
Moreover, $y$ appears in the probabilistic assignment of $x$:
However, due to the particular form of the assignment, the recurrence of $\E(x_n)$ does not contain $y$.
Nevertheless, $y$ appears in the recurrence of $\E(x^2_n)$.
This phenomenon is specific to the probabilistic setting.
For deterministic loops, it is always the case that if the values of a program variable $w$ do not depend on defective variables, then neither do the values of any power of $w$.
\end{example}

In light of the phenomenon exhibited in Example~\ref{ex:prob-phenomena}, we adapt our notion of \emph{variable dependency}, for probabilistic loops.
Without loss of generality, we assume that every program variable has exactly one assignment in the loop body.
Let $\P$ be a probabilistic loop and $x, y \in \vars\P$.
We say $x$ \emph{depends on} $y$, if $y$ appears in the assignment of $x$.
Additionally, the dependency is \emph{linear} if all occurrences of $y$ in the assignment of $x$ are linear, else the dependency is \emph{non-linear}.
Further, we consider the transitive closure of variable dependency analogous to deterministic loops and Definition~\ref{def:dependency}. 

With variable dependency thus defined, the dependency graph and the notions of effective and defective variables follow immediately.
Analogous to our characterisation of unsolvable recurrence operators in terms of defective variables for deterministic loops, \emph{all (higher) moments} of effective variables of probabilistic loops can be described by a system of linear recurrences~\cite{Moosbruggeretal2022,BKS19}.
For defective variables this property will generally fail
For instance, in Example~\ref{ex:prob-phenomena}, the variable $x$ is now classified as defective and $\E(x^2_n)$ cannot be modelled by linear recurrences for some initial values.

The only necessary change to the invariant synthesis algorithm from Section~\ref{sec:invariants} is as follows: instead of program variable monomials, we consider expected values of program variable monomials.
Now, our invariant synthesis technique from Section~\ref{sec:invariants} can also be applied to probabilistic loops to synthesise combinations of expected values of defective monomials that do satisfy a linear recurrence.

\subsection{Synthesising Solvable Probabilistic Loops} 

In Algorithm~\ref{alg:loop-synthesis} we introduced a procedure, utilising our new invariant synthesis technique from Section~\ref{sec:invariants}, to over-approximate an unsolvable loop by a solvable loop.
The inputs to Algorithm~\ref{alg:loop-synthesis} are an unsolvable loop with a recurrence operator and a natural number specifying a fixed degree.
As mentioned, our invariant synthesis procedure is also applicable to probabilistic loops using the recurrence operator modelling moments of program variables introduced in~\cite{Moosbruggeretal2022,BKS19}.
Hence, Algorithm~\ref{alg:loop-synthesis} can also be used to synthesise solvable loops from unsolvable probabilistic loops.
In the probabilistic case, however, the invariants computed by our approach are over \emph{moments of program variables}.
Therefore, the invariant ideal of probabilistic loops describes polynomial relations among a given set of moments of program variables, such as the expected values.
Consequently, the loop synthesised by Algorithm~\ref{alg:loop-synthesis} for a given probabilistic loop will be deterministic and model the dynamics of moments of program variables of the probabilistic loop.

\medskip 

\begin{example}\label{ex:prob-loop-synthesis}
Recall the program \(\P_\textrm{MC}\) of  Example~\ref{ex:NLMarkov-1}.
An invariant synthesised by our approach in Section~\ref{sec:invariants} with degree $1$ is \(\E(x_n - y_n) = \frac{5^n}{6^n}(x_0 - y_0)\).
Hence the solvable loop synthesised by Algorithm~\ref{alg:loop-synthesis} for \(\P_\textrm{MC}\) with input degree $1$ is
\begin{algorithmic}
\STATE \(\texttt{f} \leftarrow x_0 + y_0\)
\WHILE{$\star$}
\STATE \(s \leftarrow \frac{1}{2}\) \vspace{0.25em}
\STATE \(\texttt{f} \leftarrow \frac{5}{6} \texttt{f} \)
\ENDWHILE
\end{algorithmic}
where $\texttt{f}$ is the fresh variable introduced by Algorithm~\ref{alg:loop-synthesis} modelling \(\E(x_n - y_n)\).
Our approach from Algorithm~\ref{alg:loop-synthesis} synthesises this solvable loop from \(\P_\textrm{MC}\), using less than $0.5$ second within our implementation  (Section~\ref{section:experiments}).
\end{example}

\section{Applications of Unsolvable Operators towards Invariant Generation}%
\label{section:case-studies}
Our approach automatically generates invariants for programs with defective variables (Section~\ref{sec:invariants}), and pushes the boundaries of both theory and practice of invariant generation:  we introduce and incorporate defective variable analysis  into the state-of-the-art methodology for reasoning about solvable loops, complementing thus existing methods, see e.g.,~\cite{Rodriguez04,Kovacs08,Kincaid18,Humenberger18},  in the area.   
As such, the class of unsolvable loops that can be handled by our work extends (aforementioned) existing approaches on polynomial invariant generation. 
The experimental results of our approach (see Section~\ref{section:experiments}) demonstrate the efficiency and scalability of our work in  deriving invariants for unsolvable loops.  
Since our approach to loops via recurrences is generic, 
we can deal with emerging applications of programming paradigms such as: transitions systems and statistical moments in probabilistic programs; and reasoning about biological systems. 
We showcase these applications in this section and also exemplify the limitations of our work.
In the sequel,  we write $\E(t)$ to refer to the expected value of an expression $t$, and denote by $\E(t_n)$ (or \(\E(t(n))\)) the expected value of \(t\) at loop iteration $n$.

\medskip

\begin{example}[Moments of Probabilistic Programs~\cite{schreuder2019invariants}\label{ex:2:revised}]
\hl{As mentioned in Example~\ref{ex:prob-loop-synthesis}, \(\E(x_n - y_n) = \frac{5^n}{6^n}(x_0 - y_0)\) is an invariant for the program \(\P_\textrm{MC}\) introduced in Example~\ref{ex:NLMarkov-1}.}
Closed-form solutions for higher order expressions are also available; for example,
\begin{equation*}
\E((x_n-y_n)^d) = \frac{(2^d + 3^d)^n}{2^n\cdot 3^{dn}} (x_0 - y_0)^d
\end{equation*}
refers to the $d$th moment of $x(n)-y(n)$.
While the work in~\cite{schreuder2019invariants} uses martingale theory to synthesise the above  invariant (of degree 1), 
our approach automatically generates such invariants  over higher-order moments (see Table~\ref{table:experiments2}).
We note to this end that the defective variables in \(\P_\textrm{MC}\) are precisely $x$ and $y$ as can be seen from their mutual non-linear interdependence. Namely,  we have \(D(\P_\textrm{MC}) = \{x, y\}\) and $E(\P_\textrm{MC}) = \{s\}$.
\end{example}

\medskip

\begin{example}[\texttt{non-lin-markov-2}] \label{ex:non-lin-markov-2}
We give a second example of a non-linear Markov chain.
We analyse the moments of this probabilistic program in the next section.
\begin{algorithmic}
\STATE \(x,y \leftarrow 0,1\)
\WHILE{$\star$}
\STATE \(s \leftarrow \bernoulli{1/2}\)
\IF{\(s=0\)}
\STATE \(   
\begin{pmatrix}
x\\
y
\end{pmatrix}
\leftarrow
\begin{pmatrix}
\frac{4}{10}(x + xy)\\
\frac{4}{10}({1}{3}x + \frac{2}{3}y + xy)
\end{pmatrix}
\)
\ELSE
\STATE \(   
\begin{pmatrix}
x\\
y
\end{pmatrix}
\leftarrow
\begin{pmatrix}
\frac{4}{10}(x + y +  \frac{2}{3}xy) \\
\frac{4}{10}(2y + \frac{2}{3}xy)
\end{pmatrix}
\)
\ENDIF
\ENDWHILE
\end{algorithmic}
\end{example} 
    
\smallskip
       
\begin{example}[Biological Systems~\cite{britton2002deciding}]\label{ex:bees}%
A model for the decision-making process of swarming bees choosing one nest-site from a selection of two is introduced in~\cite{britton2002deciding} and further studied in~\cite{dreossi2016parallelotope,sankaranarayanan2020reasoning}.  
Previous works have computed probability distributions for this model~\cite{sankaranarayanan2020reasoning}. 
 The (unsolvable) loop is a discrete-time model with  five classes of bees (each represented by a program variable).
The coefficient \(\Delta\) is the length of the time-step in the model and the remaining coefficients parameterise the rates of change.
All coefficients here are symbolic.
{
    \begin{algorithmic}
        \STATE \(
        \begin{pmatrix}
        x\\
        y_1\\
        y_2\\
        z_1\\
        z_2\\
        \end{pmatrix}
        \leftarrow
        \begin{pmatrix}
        \normal{475,5}\\
        \uniform{350,400}\\
        \uniform{100,150}\\
        \normal{35, 1.5}\\
        \normal{35, 1.5}\\
        \end{pmatrix}
        \)
        \WHILE{$\star$}
        \STATE \(
        \begin{pmatrix}
        x\\
        y_1\\
        y_2\\
        z_1\\
        z_2\\
        \end{pmatrix}
        \leftarrow
        \begin{pmatrix}
       x - \Delta(\beta_1 x y_1 +  \beta_2 x y_2) \\
       y_1 + \Delta(\beta_1 x y_1 - \gamma y_1 + \delta\beta_1 y_1 z_1 + \alpha\beta_1 y_1 z_2) \\
       y_2 + \Delta(\beta_2 x y_2 - \gamma y_2 + \delta\beta_2 y_2 z_2 + \alpha\beta_2 y_2 z_1)\\
       z_1 + \Delta(\gamma y_1 - \delta\beta_1 y_1 z_1 - \alpha\beta_2 y_2 z_1)\\
       z_2  + \Delta(\gamma y_2 - \delta\beta_2 y_2 z_2 - \alpha\beta_1 y_1 z_2)
        \end{pmatrix}
        \) 
        \ENDWHILE
    \end{algorithmic}
   } 
 We note that the  model in~\cite{sankaranarayanan2020reasoning} uses truncated Normal distributions,  as~\cite{sankaranarayanan2020reasoning} is limited to finite supports for the program variables, which is not the case with our work. 
 
 In the loop above, each of the variables exhibits non-linear self-dependence, and so the variables are partitioned into  \(D(\P) = \{x, y_1, y_2, z_1, x_2 \} \) and \(E(\P) = \emptyset \). While the recurrence operator of the loop above is unsolvable, our approach infers  polynomial loop invariants using defective variable reasoning (Section~\ref{sec:invariants}). 
Namely, we generate the following closed-form solutions over expected values of program variables:
    \begin{align*}
        \E(x(n) + y_1(n) + y_2(n) + z_1(n) + z_2(n)) &= 1045, \\
        \E((x(n) + y_1(n) + y_2(n) + z_1(n) + z_2(n))^2) &= {3277349}/{3}, \quad \text{and} \\
        \E((x(n) + y_1(n) + y_2(n) + z_1(n) + z_2(n))^3) &= 1142497455.
    \end{align*}
    One can interpret such invariants in terms of the biological assumptions in the model.
    Take, for example, the fact that \(\E(x(n) + y_1(n) + y_2(n) + z_1(n) + z_2(n))\) is constant.
    This invariant is in line with the assumption in the model that the total population of the swarm is constant.
    In fact, our invariants reflect the behaviour of the system in the original \emph{continuous-time} model proposed in \cite{britton2002deciding}, because our approach is able to process all coefficients (most importantly $\Delta$) as symbolic constants.
 \end{example}

\medskip 
\begin{example}[Probabilistic Transition Systems~\cite{schreuder2019invariants}]\label{ex:pts}
    Consider the following probabilistic loop  modelling a \emph{probabilistic transition system} from  \cite{schreuder2019invariants}:

    \begin{algorithmic}
    \WHILE{$\star$}
    \STATE \(
        \begin{pmatrix}
            a \\
            b
        \end{pmatrix} \leftarrow
        \begin{pmatrix}
            \normal{0, 1} \\
            \normal{0, 1}
        \end{pmatrix}
        \)
    \STATE \(
        \begin{pmatrix}
            x \\
            y
        \end{pmatrix} \leftarrow
        \begin{pmatrix}
            x + axy \\
            y + bxy
        \end{pmatrix}
        \)  
    \ENDWHILE
    \end{algorithmic}
    While~\cite{schreuder2019invariants} uses martingale theory to synthesise a degree one invariant of the form \(a \E(x_k) + b \E(y_k) = a \E(x_0) + b \E(y_0)\), our work automatically generates  invariants over higher-order moments involving the defective variables \(x\) and \(y\), as presented in Table~\ref{table:experiments2}. 
\end{example}

\medskip

The next example demonstrates an unsolvable loop whose recurrence operator cannot (yet) be handled by our work. 

\medskip

\begin{example}[Trigonometric Updates]
    As our approach is limited to polynomial updates of the program variables, the loop below cannot be handled by our work: 

    \begin{algorithmic}
    \WHILE{$\star$}
    \STATE \(
    \begin{pmatrix}
        x \\
        y
    \end{pmatrix}
    \leftarrow
    \begin{pmatrix}
        \cos(x) \\
        \sin(x)
    \end{pmatrix}
    \)
    \ENDWHILE
    \end{algorithmic}
    Note the trigonometric functions are transcendental, from which it follows that one cannot generally obtain closed-form solutions for the program variables.
    Nevertheless, this program does admit polynomial invariants in the program variables; for example,  \(x^2 + y^2 = 1\). 
     Although our definition of a defective variables does not apply here, we could say the variable $x$ here is \emph{somehow defective}: while the exact value of \(\sin(x)\) cannot be computed, it could be approximated using  power series. Extending our work with more general notions of defective variables is an interesting line for future work.
\end{example}

\medskip

\hl{Examples~\ref{ex:squares+}--\ref{ex:deg-d} (below) are custom-made benchmarks.}
We have tailored these benchmarks to demonstrate the flexibility and applicability of our method to the current state of the art. 
Our experimental analysis is delayed to Section~\ref{section:experiments}.

\medskip
\begin{example}[\texttt{squares+}] \label{ex:squares+}
\hfill
\begin{algorithmic}
\STATE \(s, x, y, z \leftarrow 0, 2, 1, 0\)
\WHILE{$\star$}
\STATE \(s \leftarrow \bernoulli{1/2}\)
\STATE \(z \leftarrow z-1\ \{1/2\}\ z + 2\)
\STATE \(x \leftarrow 2x + y^2 + s + z\)
\STATE \(y \leftarrow 2y - y^2 +2s\)
\ENDWHILE
\end{algorithmic}
\end{example}

\medskip

\begin{example}[\texttt{prob-squares}] \label{ex:prob-squares}
\hfill
\begin{algorithmic}
\STATE \(g \leftarrow 1\)
\WHILE{$\star$}
\STATE \(g \leftarrow \uniform{g,  2g}\)
\STATE \(
\begin{pmatrix}
a \\
b \\
c
\end{pmatrix} \leftarrow
\begin{pmatrix}
a^2 + 2bc -df +b \\
df -a^2 +2bd +2c \\
g -bc -bd +\frac{1}{2}a
\end{pmatrix}
\)
\ENDWHILE
\end{algorithmic}
\end{example}

\medskip
\begin{example}[\texttt{squares-squared}] \label{ex:squares-squared}
\hfill
\begin{algorithmic}
\WHILE{$\star$}
\STATE \(
\begin{pmatrix}
x \\
y \\
z \\
m
\end{pmatrix} \leftarrow
\begin{pmatrix}
xyz + x^2 \\
2y + z - x^2 + 3ym z^2 \\
\frac{3}{2}x + \frac{3}{2}z + \frac{1}{2}y + \frac{1}{2}x^2 \\
\frac{2}{3}z +3m - \frac{1}{3}x^2 -\frac{1}{3}xyz -ymz^2
\end{pmatrix}
\)
\ENDWHILE
\end{algorithmic}
\end{example}

\medskip


\begin{example}[\texttt{deg-d}] \label{ex:deg-d}
The benchmarks \texttt{deg-5}, \texttt{deg-6}, \texttt{deg-7}, \texttt{deg-8}, \texttt{deg-9}, and \texttt{deg-500} are parameterised by the degree \(d\) in the following program.

\begin{algorithmic}
\STATE \(x, y \leftarrow 1, 1\)
\WHILE{$\star$}
\STATE \(z \leftarrow \normal{0, 1}\)
\STATE \(
    \begin{pmatrix}
        x \\
        y
    \end{pmatrix} \leftarrow
    \begin{pmatrix}
        2x^d + z + z^2 \\
        3x^d + z + z^2 + z^3
    \end{pmatrix}
    \)
\ENDWHILE
\end{algorithmic}
\end{example}

\medskip

The following set of examples are taken from the literature on the theory of trace maps.
Arguably the most famous example is the classical \emph{Fibonacci Trace Map} \(f\colon \R^3\to\R^3\) given by \(f(x,y,z)=(y,z,2yz-x)\) (\cref{ex:fib1} below); said map has garnered the attention of researchers in fields as diverse as invariant analysis, representation theory, geometry, and mathematical physics (cf.\ the survey papers \cite{baake1993trace,roberts1994trace}).
From a computational viewpoint, trace maps arise from substitution rules on matrices (see, again, the aforementioned survey papers).
Given two matrices \(A,B\in\SL(2,\R)\) (the group of \(2\times 2\) matrices with unit determinant), consider the following substitution rule on strings of matrices: \(A\mapsto B\) and \(B\mapsto AB\).
The classical Fibonacci Trace Map is determined by the action of this substitution on the traces of the matrices; i.e.,
    \begin{equation*}
        f\bigl(\tr(A),\tr(B), \tr(AB)\bigr) = \bigl(\tr(B), \tr(AB), -\tr(A) + 2\tr(B)\tr(AB)  \bigr).
    \end{equation*}
Further examples of trace maps (\cref{ex:fib2} and \cref{ex:fib3} below) are constructed from similar substitution rules on strings of matrices.

\re{
Let 
    \begin{align*}
        I_1(x,y,z) &= x^2 + y^2 + z^2 - 2xyz, \\
        I_2(x,y,z) &= 4x^2y - 2xz - y, \\
        I_3(x,y,z) &= x^2 +y^2 +z^2 - xyz - xy - xz - yz -x - y - z.
    \end{align*}
Then our work generates the cubic polynomial invariants \(I_1(x,y,z) - I_1(x_0, y_0, z_0)=0\), \(I_2(x,y,z)-I_2(x_0,y_0,z_0)=0\), and \(I_3(x,y,z)-I_3(x_0,y_0,z_0)=0\) for Example~\ref{ex:fib1}, Example~\ref{ex:fib2}, and Example~\ref{ex:fib3}, respectively.
}

\medskip

\begin{example}[\texttt{fib1}] \label{ex:fib1} \hfill
\begin{algorithmic}
\WHILE{$\star$}
\STATE \(
    \begin{pmatrix}
        x \\
        y \\
        z
    \end{pmatrix}
    \leftarrow
    \begin{pmatrix}
        y \\
        z \\
        2yz -x
    \end{pmatrix}
    \)
\ENDWHILE
\end{algorithmic}
\end{example}

\medskip

\begin{example}[\texttt{fib2}] \label{ex:fib2}
A Generalised Fibonacci Trace Map

\begin{algorithmic}
\WHILE{$\star$}
\STATE \(
    \begin{pmatrix}
        x \\
        y \\
        z
    \end{pmatrix}
    \leftarrow
    \begin{pmatrix}
        y \\
        2xz - y \\
        4xyz - 2x^2 -2y^2 + 1
    \end{pmatrix}
    \)
\ENDWHILE
\end{algorithmic}

\medskip

\end{example}

\begin{example}[\texttt{fib3}] \label{ex:fib3}
A second Generalised Fibonacci Trace Map

\begin{algorithmic}
\WHILE{$\star$}
\STATE \(
    \begin{pmatrix}
        x \\
        y \\
        z
    \end{pmatrix}
    \leftarrow
    \begin{pmatrix}
        1 + x + y + xy -z \\
        x \\
        y
    \end{pmatrix}
    \)
\ENDWHILE
\end{algorithmic}
\end{example}

\medskip

Examples~\ref{ex:markov-triples-toggle}--\ref{ex:markov-triples-random} are while loops that generate \emph{Markov triples} \cite[Chapter II.3]{cassels1972markov}; that is,
at every iteration each loop variable takes an integer value that appears in a Diophantine solution of the Markov equation \(x^2+y^2+z^2=3xyz\).

\medskip

\begin{example}[\texttt{markov-triples-toggle}] \label{ex:markov-triples-toggle}
A while loop that generates an infinite sequence of nodes on the Markov tree (the walk alternates between `upper' and `lower' branches).
\begin{algorithmic}
\STATE x,y,z = 1,1,2;
\STATE \(\textrm{branch} = 0\);
\WHILE{$\star$}
\IF{\(\textrm{branch} = 0\)}
\STATE \(
    \begin{pmatrix}
        x \\
        y \\
        z
    \end{pmatrix}
    \leftarrow
    \begin{pmatrix}
        x \\
        3xy -z \\
        y
    \end{pmatrix};\)
\STATE \(\textrm{branch}=1\);
\ELSE   
\STATE \(
    \begin{pmatrix}
        x \\
        y \\
        z
    \end{pmatrix}
    \leftarrow
    \begin{pmatrix}
        y \\
        3yz -x \\
        z
    \end{pmatrix};\)
\STATE \(\textrm{branch}=0\);
\ENDIF
\ENDWHILE
\end{algorithmic}
For this example, our approach %
generates the polynomial invariant, the \emph{Markov equation}, given by \(x^2 + y^2 + z^2 - 3xyz=0\).
\end{example}

\medskip

\begin{example}[\texttt{markov-triples-random}] \label{ex:markov-triples-random}
A while loop that simulates a Bernoulli walk on the Markov tree.
\begin{algorithmic}
\STATE x,y,z = 1,1,2
\WHILE{$\star$}
\STATE \(p \leftarrow \bernoulli{1/2}\)
\IF{\(p=1\)}
\STATE \(
    \begin{pmatrix}
        x \\
        y \\
        z
    \end{pmatrix}
    \leftarrow
    \begin{pmatrix}
        x \\
        3xy-z \\
        y
    \end{pmatrix}
    \)
\ELSE
\STATE \(
    \begin{pmatrix}
        x \\
        y \\
        z
    \end{pmatrix}
    \leftarrow
    \begin{pmatrix}
        y \\
        3yz-x \\
        z
    \end{pmatrix}
    \)
\ENDIF
\ENDWHILE
\end{algorithmic}
For this benchmark, our technique %
generates both closed-forms and invariants in (higher) moments of program variables such as
\(\mathbb{E}(x_n^2 + y_n^2 + z_n^2 - 3x_ny_nz_n)=0\) and
\(\mathbb{E}(x_n-z_n) = 2^{-n}\).
\end{example}

\medskip

The next two benchmarks concern a special class of polynomial automorphisms, the \emph{Yagzhev} maps, (sometimes \emph{cubic-homogeneous} maps) introduced (independently) by Yagzhev \cite{jacobianconj1} and  Bass, Connell, and Wright \cite{jacobianconj2} in the context of the Jacobian conjecture.

Recall that Yagzhev maps \(f\colon \C^n \to \C^n\) take the form \(f(x) = x - g(x)\) such that \(\det f'(x) = 1\) for all \(x\), and \(g\colon \C^n \to \C^n\) is a homogeneous polynomial mapping of degree \(3\).

De Bondt exhibited a Yagzhev mapping in 10 dimensions that has no linear invariants (providing a counterexample to the ``linear dependence conjecture'' for the class of maps) \cite{yagzhev10}.
Zampieri demonstrated that De Bondt's example has quadratic and cubic invariants \cite{yagzhev9} (see  \cref{ex:yagzhev9} for such a Yagzhev map in 9 dimensions).
Work by Santos Freire, Gorni, and Zampieri \cite{yagzhev11} exhibited a Yagzhev mapping in 11 dimensions that has neither linear invariants nor quadratic invariants (see \cref{ex:yagzhev11}).


\begin{example}[\texttt{yagzhev9} \cite{yagzhev9}] \label{ex:yagzhev9} \hfill
\begin{algorithmic}
\WHILE{$\star$}
\STATE \(
    \begin{pmatrix}
        x_1 \\
        x_2 \\
        x_3 \\
        x_4 \\
        x_5 \\
        x_6 \\
        x_7 \\ 
        x_8 \\
        x_9 \\
    \end{pmatrix}
    \leftarrow
    \begin{pmatrix}
        x_1 + x_1 x_7 x_9 + x_2 x_9^2 \\
        x_2 - x_1 x_7^2 -x_2 x_7 x_9\\
        x_3 + x_3x_7x_9 + x_4x_9^2\\
        x_4 - x_3x_7^2 - x_4x_7x_9\\
        x_5 + x_5x_7x_9 + x_6x_9^2\\
        x_6 - x_5x_7^2 - x_6x_7x_9\\
        x_7 + (x_1x_4 - x_2x_3)x_9\\
        x_8 + (x_3x_6 - x_4x_5)x_9\\
        x_9 + (x_1x_4 - x_2x_3)x_8 - (x_3x_6 + x_4x_5)x_7  \\
    \end{pmatrix}
    \)
\ENDWHILE
\end{algorithmic}
For this example, our work generates an invariant quadratic homogeneous polynomial in six variables and three symbolic constants, which we can interpret in terms of determinants (as given below):
    \begin{multline*}
       a\begin{vmatrix}
            x_1(n) & x_2(n) \\
            x_3(n) & x_4(n)
        \end{vmatrix}
        +b\begin{vmatrix}
            x_3(n) & x_4(n) \\
            x_5(n) & x_6(n)
        \end{vmatrix}
       +c\begin{vmatrix}
            x_1(n) & x_2(n) \\
            x_5(n) & x_6(n)
        \end{vmatrix} \\
=
       a\begin{vmatrix}
            x_1(0) & x_2(0) \\
            x_3(0) & x_4(0)
        \end{vmatrix}
    +b\begin{vmatrix}
        x_3(0) & x_4(0) \\
        x_5(0) & x_6(0)
    \end{vmatrix}
    +c\begin{vmatrix}
            x_1(0) & x_2(0) \\
            x_5(0) & x_6(0)
        \end{vmatrix}.
    \end{multline*}
Our computation confirms previous work by the authors of \cite{yagzhev9}.
Those authors demonstrated that this example has no linear invariants, but does admit the above quadratic invariant.
\end{example}

\newpage

\begin{example}[\texttt{yagzhev11} \cite{yagzhev11}] \hfill
\label{ex:yagzhev11}
\begin{algorithmic}
\WHILE{$\star$}
\STATE \(
    \begin{pmatrix}
        x_1 \\
        x_2 \\
        x_3 \\
        x_4 \\
        x_5 \\
        x_6 \\
        x_7 \\ 
        x_8 \\
        x_9 \\
        x_{10} \\
        x_{11}
    \end{pmatrix}
    \leftarrow
    \begin{pmatrix}
        x_1 -x_3 x_{10}^2 \\
        x_2 - x_2x_{11}^2\\
        x_3 + x_1 x_{11}^2 - x_2 x_{10}^2\\
        x_4 - x_6 x_{10}^2\\
        x_5 - x_6 x_{11}^2\\
        x_6 + x_4 x_{11}^2 - x_5 x_{10}^2\\
        x_7 - x_9 x_{10}^2\\
        x_8 - x_9 x_{11}^2\\
        x_9 + x_7 x_{11}^2 - x_8 x_{10}^2 \\
        x_{10} + (x_3 x_5 - x_2 x_6)x_7 + (x_1 x_6 - x_3 x_4)x_8 + (x_2 x_4 - x_1 x_5)x_9\\
        x_{11} - x_{10}^3
    \end{pmatrix}
    \)
\ENDWHILE
\end{algorithmic}
For this example, our approach  generates an invariant cubic homogeneous polynomial, which we can interpret as the determinant of a matrix in the program variables:
    \begin{equation*}
        \begin{vmatrix}
            x_1(n) & x_2(n) & x_3(n) \\
            x_4(n) & x_5(n) & x_6(n) \\
            x_7(n) & x_8(n) & x_9(n)
        \end{vmatrix}
        = 
        \begin{vmatrix}
            x_1(0) & x_2(0) & x_3(0) \\
            x_4(0) & x_5(0) & x_6(0) \\
            x_7(0) & x_8(0) & x_9(0)
        \end{vmatrix}.
    \end{equation*}
    For the avoidance of doubt, the above polynomial was previously found by the authors of \cite{yagzhev11}.  
    Indeed, our implementation confirms previous results: there are neither linear nor quadratic invariants for this example.
\end{example}

\medskip

    \begin{example}[\texttt{nagata} \cite{nagata}] \label{ex:nagata} Let us now consider the classical \emph{Nagata automorphism} introduced by Nagata \cite[pg.\ 41]{nagata} (see also \cite{vandenessen2003polynomial}).
        \begin{algorithmic}
\WHILE{$\star$}
\STATE \(
    \begin{pmatrix}
        x \\
        y \\
        z
    \end{pmatrix}
    \leftarrow
    \begin{pmatrix}
        x - 2(xz+y^2)y - (xz+y^2)^2 z \\
        y + (xz+y^2)z\\
        z
    \end{pmatrix}
    \)
\ENDWHILE
\end{algorithmic}
For the Nagata Automorphism,
it is easy to see that the variable \(z\) is effective and so contributes a linear invariant.
When used to search for quadratic closed-forms, our method generates the polynomial
    \begin{equation*}
        c(x(n)z(n) + y(n)^2) = a z(0) + b z(0)^2 - a z(n) - b z(n)^2 + c (x(0)z(0) + y(0)^2)
    \end{equation*}
where \(a\), \(b\), and \(c\) are symbolic constants.
Here we note that \(x(n)z(n)\) and \(y(n)^2\) are defective monomials.
Our computation confirms the invariants and closed-forms established in \cite{nagata}.

\end{example}

\section{Experiments}
\label{section:experiments}

In this section, we report on our implementation towards fully automating the analysis of unsolvable loops and describe our experimental setting and results.

\subsection{Implementation}
\hl{Algorithm~\ref{alg:nmcv-alg}, our method for synthesising invariants involving defective variables, and Algorithm~\ref{alg:loop-synthesis}, for the synthesis of solvable loops from unsolvable loops, are implemented in the \Tool{} tool\footnote{\url{https://github.com/probing-lab/polar}}.}
We use  \texttt{python3} and the  \texttt{sympy} package~\cite{10.7717/peerj-cs.103} for symbolic manipulations of algebraic expressions. 

\subsection{Benchmark Selection}

While previous works~\cite{Rodriguez04,Rodriguez07,Oliveira16,Aligator18,BKS19,Kincaid18}
consider invariant synthesis, their techniques are only applicable in a restricted setting:
the analysed loops are, for the most part, solvable;
or, for unsolvable loops, the search for polynomial invariants is template-driven or employs heuristics. 
In contrast, the work herein complements and extends the techniques presented for solvable loops in~\cite{Rodriguez04,Rodriguez07,Oliveira16,Aligator18,BKS19,Kincaid18}.
Indeed, our automated approach turns the problem of polynomial invariant synthesis into a decidable problem for
a larger class of unsolvable loops.  

While solvable loops can clearly be analysed by our work, the main 
benefit of our work comes with handling unsolvable loops by translating them into solvable ones. 
For this reason, in our experimentation we are not interested in examples of solvable loops and so only focus on unsolvable loop benchmarks.  
There is therefore no sensible baseline that we can compare against, as state-of-the-art techniques cannot routinely synthesise invariants for unsolvable loops in the generality we present.

In our work we present a set of \hl{23 examples} of unsolvable loops, as listed in Table~\ref{table:experiments}\footnote{each benchmark in Table~\ref{table:experiments} references, in parentheses,  the respective example from our paper}. 
\hl{Common to all 23} benchmarks from Table~\ref{table:experiments} is the exhibition of circular non-linear dependencies within the variable assignments. We display features of our benchmarks in Table~\ref{table:experiments} (for example, column~3 of Table~\ref{table:experiments} counts the number of defective variables for each benchmark).

Three examples from Table~\ref{table:experiments} are challenging benchmarks taken from the invariant generation literature~\cite{chakarov2016deductive,dreossi2016parallelotope,schreuder2019invariants,sankaranarayanan2020reasoning}; 
full automation in analysing these examples was not yet possible. 
These examples are listed as \texttt{non-lin-markov-1}, \texttt{pts}, and \texttt{bees} in Table~\ref{table:experiments}, respectively corresponding to Example~\ref{ex:NLMarkov-1} (and hence Example~\ref{ex:2:revised}), 
Example~\ref{ex:pts}, and Example~\ref{ex:bees} from Section~\ref{section:case-studies}.
\hl{Eight further benchmarks, as described in Examples~\ref{ex:fib1}--\ref{ex:nagata}, are drawn from the theoretical physics and pure mathematics literature (references are given in Section~\ref{section:case-studies}).
}

\hl{The remaining 12 examples} of Table~\ref{table:experiments} are self-constructed benchmarks to highlight the key ingredients of our work in synthesising  invariants associated with unsolvable recurrence operators.

\begin{table}[t]
  \setlength{\arraycolsep}{3pt}
  \setlength{\tabcolsep}{2.4pt}
  \fontsize{8}{8}\selectfont
  \centering
  \def\arraystretch{1.25}
  \begin{tabular}{@{}lrrrrrr@{}}
    \toprule
    \textsc{Benchmark} & \textsc{Var} & \textsc{Def} & \textsc{Term} & \textsc{Deg} & \textsc{Cand-7} & \textsc{Eqn-7}  \\
    \midrule
    
    \texttt{squares (Fig.~\ref{fig:running:squares})} & 3 & 2 & 8 & 2 & 35 & 113 \\
    \texttt{squares+ (Ex.\ref{ex:squares+})} & 4 & 2 & 12 & 2 & 35 & 204 \\
    \texttt{non-lin-markov-1 (Ex.~\ref{ex:NLMarkov-1})} & 2 & 2 & 11 & 2 & 35 & 64 \\
    \texttt{non-lin-markov-2 (Ex.~\ref{ex:non-lin-markov-2})} & 2 & 2 & 11 & 2 & 35 & 64 \\
    \texttt{prob-squares (Ex.~\ref{fig:running:2})} & 3 & 3 & 4 & 3 & 119 & 337 \\
    \texttt{pts (Ex.~\ref{ex:pts})} & 4 & 2 & 6 & 3 & 35 & 57 \\
    \texttt{squares-squared (Ex.~\ref{ex:squares_solved})} & 4 & 4 & 15 & 4 & 329 & - \\
    \texttt{bees (Ex.~\ref{ex:bees})} & 5 & 5 & 21 & 5 & 791 & - \\
    \texttt{deg-5 (Ex.~\ref{ex:deg-d})} & 3 & 2 & 8 & 5 & 35 & 42 \\
    \texttt{deg-6 (Ex.~\ref{ex:deg-d})} & 3 & 2 & 8 & 6 & 35 & 42 \\
    \texttt{deg-7 (Ex.~\ref{ex:deg-d})} & 3 & 2 & 8 & 7 & 35 & 42 \\
    \texttt{deg-8 (Ex.~\ref{ex:deg-d})} & 3 & 2 & 8 & 8 & 35 & 43 \\
    \texttt{deg-9 (Ex.~\ref{ex:deg-d})} & 3 & 2 & 8 & 9 & 35 & 43 \\
    \texttt{deg-500 (Ex.~\ref{ex:deg-d})} & 3 & 2 & 8 & 500 & 35 & 43 \\
    \hl{\texttt{fib1 (Ex.~\ref{ex:fib1})}} & 3 & 3 & 4 & 2 & 119 & 204  \\
    \hl{\texttt{fib2 (Ex.~\ref{ex:fib2})}} & 3 & 3 & 7 & 3 & 119 & 368 \\
    \hl{\texttt{fib3 (Ex.~\ref{ex:fib3})}} & 3 & 3 & 7 & 2 & 119 & 204 \\
    \hl{\texttt{markov-triples-toggle (Ex.~\ref{ex:markov-triples-toggle})}} & 3 & 3 & 10 & 2 & 119 & 454 \\
    \hl{\texttt{markov-triples-random (Ex.~\ref{ex:markov-triples-random})}} & 3 & 3 & 9 & 2 & 119 & 254 \\
    \hl{\texttt{yagzhev9 (Ex.~\ref{ex:yagzhev9})}} & 9 & 9 & 29 & 3 & 11439 & \timeout \\
    \hl{\texttt{yagzhev11 (Ex.~\ref{ex:yagzhev11})}} & 11 & 11 & 30 & 3 & 31823 & \timeout  \\
    \hl{\texttt{nagata (Ex.~\ref{ex:nagata})}} & 3 & 2 & 9 & 4 & 35 & 1313 \\
    \bottomrule
  \end{tabular}
  \vspace{0.5em}
  \caption{Features of the benchmarks. \textsc{Var} = Total number of loop variables; \textsc{Def} = Number of defective variables; \textsc{Term} = Total number of terms in assignments; \textsc{Deg} = Maximum degree in assignments; \textsc{Cand-7} = Number of monomials in candidate with degree $7$; \textsc{Eqn-7} = Size of the system of equations associated with a candidate of degree $7$; \textsc{-} = Timeout (60 seconds).}
\label{table:experiments}
\end{table}
\subsection*{Experimental Setup} 
We evaluate our approach in \Tool{} on the examples from  Table~\ref{table:experiments}. All our  experiments were performed on a machine with a 1.80GHz Intel i7 processor and 16 GB of RAM.

\subsection*{Evaluation Setting}
The landscape of benchmarks in the invariant synthesis literature for solvable loops can appear complex with: high numbers of variables, high degrees in polynomial updates, and multiple update options.
However, we do not intend to compete on these metrics for solvable loops. 
The power of our invariant synthesis technique lies in its ability to handle `unsolvable' loop programs: those with cyclic inter-dependencies and non-linear self-dependencies in the loop body.
\hl{
Regarding Algorithm~\ref{alg:loop-synthesis}, specifying our method for synthesising solvable from unsolvable loops, the main complexity lies in constructing an invariant involving defective variables.
Once, such an invariant has been found, the remaining complexity of Algorithm~\ref{alg:loop-synthesis} is linear in the number of program variables.
Hence, the experimental results for our invariant synthesis technique also show the feasibility of our new method synthesising solvable from unsolvable loops.
}
While  the benchmarks of Table~\ref{table:experiments} may  be considered {simple}, the fact that previous works cannot systematically handle such \emph{simple models} crystallises that even simple loops can be unsolvable, limiting the applicability of state-of-the-art methods, as illustrated in the example below. 

\medskip

\begin{example}
    Consider the question: \emph{does the unsolvable loop program \texttt{deg-9} in Table~\ref{table:experiments} (i.e. Example~\ref{ex:deg-d}) possess a cubic invariant?}
    The program variables for \texttt{deg-9} are \(x,y,\) and \(z\).
    The variables \(x\) and \(y\) are  defective. 
    Using \Tool{}, we derive that 
   the cubic, non-trivial polynomial  \(p(x_n,y_n,z_n)\) given by 
        \begin{multline*}
             12(ay_n + by_n^2 + cy_n^3 + dx_n +ex_ny_n + fx_ny_n^2) - (3a+24b+117c+2d+17e+26f)x_n^2
             \\ - (6a-6b+315c+4d-2e+88f)x_n^2y_n + 3(3a-3b+144c+2d-e+35f)x_n^3
        \end{multline*}
        yields a cubic polynomial loop invariant, where \(a,b,c,d,e,\) and \(f\) are symbolic constants. 
    Moreover, for \(n\ge 1\), the expectation of this polynomial  (\texttt{deg-9} is a probabilistic loop) 
    in the \(n\)th iteration is given by
    \begin{equation*}
        \E(p(x_n,y_n,z_n)) = -108a +312b -1962c -68d +52e -68f.
    \end{equation*}
\end{example}

\begin{table}[t]
  \setlength{\arraycolsep}{3pt}
  \setlength{\tabcolsep}{2.4pt}
  \fontsize{8}{8}\selectfont
  \centering
  \def\arraystretch{1.25}
  \begin{tabular}{@{}llrrrrrrrrrrrrrrr@{}}
    \toprule
    \multicolumn{1}{c}{} & \phantom{x} & \multicolumn{7}{c}{Candidate Degree}\\ 
    \cmidrule{1-2} \cmidrule{3-9}
    \textsc{Benchmark} && 1 & 2 & 3 & 4 & 5 & 6 & 7  \\
    \midrule
    
    \texttt{squares (Fig.~\ref{fig:running:squares})} && *0.95 & 1.09 & 1.66 & 2.72 & 4.98 & 11.67 & 19.87 & \\
    \texttt{squares+ (Ex.~\ref{ex:squares+})} && *0.67 & 1.02 & 1.92 & 3.77 & 7.39 & 15.66 & 28.78  \\
    \texttt{non-lin-markov-1 (Ex.~\ref{ex:NLMarkov-1})} && *0.42 & *0.68 & *1.04 & *2.35 & *4.02 & *9.81 & *13.58  \\
    \texttt{non-lin-markov-2 (Ex.~\ref{ex:non-lin-markov-2})} && *0.38 & *0.59 & *1.15 & *2.40 & *3.19 & *5.91 & *14.91  \\
    \texttt{prob-squares (Ex.~\ref{ex:prob-squares})} && *0.76 & 1.50 & 4.69 & 20.85 & \timeout & \timeout & \timeout  \\
    \texttt{squares-and-cube (Fig.~\ref{fig:running:2})} && 0.32 & *0.51 & *1.20 & *4.21 & *19.49 & \timeout & \timeout \\
    \texttt{pts (Ex.~\ref{ex:pts})} && *0.35 & *0.49 & *0.71 & *1.13 & *1.90 & *3.42 & *5.98 \\
    \texttt{squares-squared (Ex.~\ref{ex:squares-squared})} && *0.48 & *1.55 & *8.92 & \timeout & \timeout & \timeout & \timeout \\
    \texttt{bees (Ex.~\ref{ex:bees})} && *0.69 & *3.27 & *42.16 & \timeout & \timeout & \timeout & \timeout  \\
    \texttt{deg-5 (Ex.~\ref{ex:deg-d})} && *0.46 & *0.65 & *1.01 & *1.94 & *4.04 & *8.20 & *19.28 \\
    \texttt{deg-6 (Ex.~\ref{ex:deg-d})} && *0.54 & *0.69 & *1.62 & *1.91 & *4.32 & *8.24 & *19.75 \\
    \texttt{deg-7 (Ex.~\ref{ex:deg-d})} && *0.49 & *0.98 & *1.06 & *1.84 & *4.42 & *8.98 & *19.39 \\
    \texttt{deg-8 (Ex.~\ref{ex:deg-d})} && *0.45 & *0.62 & *1.08 & *2.04 & *4.07 & *8.93 & *20.97 \\
    \texttt{deg-9 (Ex.~\ref{ex:deg-d})} && *0.47 & *0.65 & *1.11 & *1.84 & *4.31 & *7.85 & *19.67 \\
    \texttt{deg-500 (Ex.~\ref{ex:deg-d})} && *0.47 & *0.65 & *1.08 & *1.96 & *4.29 & *8.58 & *21.05 \\
    \hl{\texttt{fib1 (Ex.~\ref{ex:fib1})}} && 0.31 & 0.41 & *0.91 & 2.02 & 2.97 & *5.47 & 14.66 \\
    \hl{\texttt{fib2 (Ex.~\ref{ex:fib2})}} && 0.31 & 0.49 & *1.34 & 3.06 & 8.64 & *17.24 & \timeout \\
    \hl{\texttt{fib3 (Ex.~\ref{ex:fib3})}} && 0.31 & 0.48 & *1.73 & 3.46 & 16.92 & \timeout & \timeout \\
    \hl{\texttt{markov-triples-toggle (Ex.~\ref{ex:markov-triples-toggle})}} && 0.39 & 0.61 & *1.41 & 2.63 & 5.84 & *9.45 & 22.71 \\
    \hl{\texttt{markov-triples-random (Ex.~\ref{ex:markov-triples-random})}} && *0.44 & *0.62 & *1.65 & *3.52 & *7.93 & *21.39 & *71.23 \\
    \hl{\texttt{yagzhev9 (Ex.~\ref{ex:yagzhev9})}} && 0.47 & *2.44 & *24.31 & \timeout & \timeout & \timeout & \timeout \\
    \hl{\texttt{yagzhev11 (Ex.~\ref{ex:yagzhev11})}} && 0.59 & 7.12 & *39.27 & \timeout & \timeout & \timeout & \timeout \\
    \hl{\texttt{nagata (Ex.~\ref{ex:nagata})}} && 0.33 & *1.04 & *3.49 & *8.19 & *40.03 & \timeout & \timeout \\
    \bottomrule
  \end{tabular}
  
  \vspace{0.5em}
  \timeout{} = Timeout (75 seconds); {}* = Found invariant of the corresponding degree.
  \vspace{0.5em}
  \caption{\hl{The time taken to search for polynomial candidates with closed-forms (results in seconds)}. 
}
     \label{table:experiments2}
\end{table}

\subsection*{Experimental Results} 
Our experiments using \Tool{} to synthesise invariants are summarised in  Table~\ref{table:experiments2}, using the 
examples of Table~\ref{table:experiments}. 
Patterns in Table~\ref{table:experiments2} show that, if time considerations are the limiting factor, then the greatest impact cannot be attributed to the number of program variables nor the maximum degree in the program assignments (Table~\ref{table:experiments}).
Three of the examples in Table~\ref{table:experiments} exhibit timeouts (60 secs) in the final column.
The property common to each of these examples is the high number of monomial terms in any polynomial candidate of degree 7. In turn, this property feeds into a large system of simultaneous equations, which we solve to test for invariants.
Indeed, time elapsed is not so strongly correlated with either of these program features.
As supporting evidence we note the specific attributes of benchmark \texttt{deg-500} whose assignments include polynomial updates of large degree and yet returns synthesised invariants with relatively low time elapsed in Table~\ref{table:experiments2}.
We note the significantly longer running times associated with the benchmark \texttt{bees} (Example~\ref{ex:bees}).
This suggests that mutual dependencies between program variables in the loop assignment  explain this phenomenon: such inter-relations lead to the construction of larger systems of equations, which itself feeds into the problem of resolving the recurrence equation associated with a candidate.

\subsection*{Experimental Summary}
Our experiments illustrate the feasibility of synthesising invariants \hl{and solvable loops} using our approach for programs with unsolvable recurrence operators from various domains such as biological systems, probabilistic loops, and classical programs (see Section~\ref{sec:invariants}).
This further motivates the theoretical characterisation of unsolvable operators in terms of defective variables (Section~\ref{sec:effective-deffective}).

\section{Conclusion}
\label{sec:conclusion}

We establish a new technique that synthesises invariants for loops with unsolvable recurrence operators and show its applicability for deterministic and probabilistic programs. \hl{Our work is further extended to translate unsolvable loops into solvable ones, by ensuring that the polynomial loop invariants of the solvable loop are invariants of the given unsolvable loop.} 
Our work is based on a new characterisation of unsolvable loops in terms of effective and defective variables: the presence of defective variables is equivalent to unsolvability.
In order to generate invariants, we provide an algorithm to isolate the defective program variables and a new method to compute polynomial combinations of defective variables admitting exponential polynomial closed-forms.
The implementation of our approach in the tool \Tool{} and our experimental evaluation demonstrate the usefulness of our alternative characterisation of unsolvable loops and the applicability of our invariant synthesis technique to systems from various domains.

\section*{Declarations}


\bmhead{Acknowledgements}

This research was supported by the Vienna Science and Technology Fund WWTF 10.47379/ICT19018 grant ProbInG, the European Research Council Consolidator Grant ARTIST 101002685, the Austrian Science Fund (FWF) project W1255-N23, and the SecInt Doctoral College funded by TU Wien. This research was also funded in part by the Austrian Science Fund (FWF) [10.55776/F85, 10.55776/W1255], for open access purposes; the authors have applied a CC BY public copyright license to any author accepted manuscript version arising from this submission. We thank the authors of \cite{bayarmagnai2024algebraic} for alerting us to an error in an early version of our manuscript and the anonymous reviewers for their constructive feedback.


\bmhead{Data Availability}
The tool's source code, data, and benchmarks used in this work are available in \Tool's repository \url{https://github.com/probing-lab/polar}.

\bibliography{invariantsbib}

\end{document}